\documentclass[submission,copyright,creativecommons]{eptcs}
\providecommand{\event}{AiML 2026}

\usepackage{aiml26}
\usepackage[noadjust]{cite}

\usepackage{iftex}
\ifpdf
  \usepackage{underscore}         
  \usepackage[T1]{fontenc}        
\else
  \usepackage{breakurl}           
\fi
\usepackage{bm}

\usepackage{graphicx} 
\let\oldalpha\alpha
\renewcommand{\alpha}{\scalebox{0.85}{$\oldalpha$}}

\usepackage{amsmath, amssymb}
\usepackage{centernot}
\usepackage{esvect}
\usepackage{semantic, syntax}
\usepackage[inline]{enumitem}
\usepackage{hyperref}
\usepackage{caption}
\usepackage{subcaption}
\usepackage{pgfplots}
\usepgfplotslibrary{fillbetween}
\pgfplotsset{compat=1.18}
\usepackage{multicol}
\usepackage{import}

\usepackage{tikz}
\usetikzlibrary{3d,
    patterns,
    patterns.meta,
    arrows,
    arrows.meta,
    positioning,
    shapes,
    calc,
    intersections,
    hobby,
    decorations.markings
}
\usepackage{bussproofs}

\usepackage[bibliography=common,appendix=append]{apxproof}

\theoremstyle{plain}
\newtheoremrep{theorem}{Theorem}[section]
\newtheoremrep{proposition}[theorem]{Proposition}
\newtheoremrep{corollary}[theorem]{Corollary}
\newtheoremrep{lemma}[theorem]{Lemma}

\usepackage{totcount}
\newtotcounter{todonum}

\newcommand{\shelf}[1]{}
\newcommand{\change}[1]{\textcolor{purple}{#1}}
\renewcommand{\change}[1]{#1}

\title{Some Results on Causal Modalities in General Spacetimes}
\author{Marco Lewis \qquad\qquad Nesta van der Schaaf
\institute{Université Paris-Saclay, CNRS, CentraleSupélec, ENS Paris-Saclay, Inria,\\ Laboratoire Méthodes Formelles, 91190, Gif-sur-Yvette, France}}

\newcommand{\titlerunning}{Causal Modalities in General Spacetimes}
\newcommand{\authorrunning}{M. Lewis, N. van der Schaaf}
\hypersetup{
  bookmarksnumbered,
  pdftitle    = {\titlerunning},
  pdfauthor   = {\authorrunning},
  pdfsubject  = {EPTCS},               
  pdfkeywords = {}
}

\def\etc{\emph{etc}.}
\def\ie{\emph{i.e.},}
\def\eg{\emph{e.g.},}

\newcommand{\norm}[1]{\left\lVert#1\right\rVert}
\DeclareMathOperator{\im}{im}

\newcommand{\genericworld}{M}

\newcommand{\genericrelation}{\mathrel{\triangleleft}}

\newcommand{\genericeval}{V}
\newcommand{\atomicprops}{AP}
\newcommand{\mframe}[2]{\langle #1, #2 \rangle}
\newcommand{\genericframe}{\mframe{\genericworld}{\genericrelation}}
\newcommand{\mmodel}[3]{\langle #1, #2, #3 \rangle}
\newcommand{\genericmodel}{\mmodel{\genericworld}{\genericrelation}{\genericeval}}

\newcommand{\entail}[3]{#1\if\relax\detokenize{#2}\relax\else,\fi #2 \models #3}
\newcommand{\nentail}[3]{#1\if\relax\detokenize{#2}\relax\else,\fi #2 \not\models #3}

\newcommand{\mentail}[3]{\entail{#1}{#2}{#3}}
\newcommand{\nmentail}[3]{\nentail{#1}{#2}{#3}}
\newcommand{\mentailgeneric}[1]{\mentail{\genericmodel}{x}{#1}}

\newcommand{\chron}{\ll}
\newcommand{\chroneq}{\mathrel{\underline{\chron}}}
\newcommand{\horismos}{\rightarrow}
\newcommand{\caus}{\preccurlyeq}
\newcommand{\after}{\mathrel{\alpha}}

\newcommand{\necc}{\square}
\newcommand{\poss}{\lozenge}

\newcommand{\inev}{\raise.4ex\hbox{$\bigtriangledown$}} 

\newcommand{\modal}[1]{\mathcal{L}(#1)}

\newcommand{\afterformula}{a \alpha f}
\newcommand{\aftertwoformula}{a \alpha_2 f}
\newcommand{\mK}{\mathbf{K}}
\newcommand{\afterlogic}{\mathbf{D4da}}
\newcommand{\aftertwologic}{\mathbf{D4da_2}}
\newcommand{\gabb}{\text{(Gabb)}}

\tikzset{empty/.style={draw=none},
    refl/.style={fill=black},
    every label/.style={draw=none,inner sep=0},
}
\newcommand{\tikzminkowski}{\def\xmax{2}
          \draw[->,thick] (0,-\xmax) -- (0,\xmax+0.2) node[left=-.5] {$t$};
          \draw[->,thick] (-\xmax,0) -- (\xmax+0.2,0) node[below=0] {$x$};
}
 
\begin{document}

\maketitle

\begin{abstract}
    Causality is one of the fundamental structures of spacetimes, determining the possible behaviour and propagation of physical information. 
    Causal structure can be analysed through the various modal logics it induces.
    The modal logics for the chronological and causal relations of the archetypal Minkowski spacetime have been classified.
    However, only partial results have been achieved for the strict variant of the causal relation, known as the \emph{after relation}.
    Towards classification, it was shown by Shapirovsky and Shehtman that the after modality in Minkowski space satisfies a formula we call the `after formula'.

    The present work continues this analysis towards arbitrary spacetimes.
    In particular, we prove that the after modality in any smooth spacetime satisfies the after formula.
    We introduce a related modal formula that demonstrates that the logic of two-dimensional spacetimes are more expressive than higher-dimensional ones.
    Lastly, we study the interrelation between the logical properties and physical properties along the causal ladder.
\end{abstract}

\section{Introduction}
There has been some research into the modal logic of \emph{Minkowski space}, the spacetime of special relativity~(Example~\ref{example:minkowski space}).
One of the early works is by Goldblatt~\cite{Goldblatt1980}, who showed that the modal logic of any~$1+n$-dimensional Minkowski spacetime with the causal relation~$\caus$, \change{moving up to and possibly at the speed of light}, is~$\mathbf{S4.2}$.
Shehtman~\cite{Shehtman1983} also observed similar results around that time for various~$\mathbf{S4}$ logics.
Shapirovsky and Shehtman~\cite{Shapirovsky2002} showed that the chronological relation $\chron$, \change{moving strictly slower than the speed of light}, again for any $1+n$-dimensional Minkowski spacetime, is given by the modal logic $\mathbf{OI.2}$, which is a sublogic of $\mathbf{S4.2}$. \change{The general definitions of~$\caus,\chron$ are given in~Section~\ref{sec:spacetime}.}

Recent works, such as \cite{Hirsch2018, Hirsch2022}, have been investigating the temporal modal logic, using both past and future modalities, of the strict variant of the causal relation (what is referred to as the \emph{after} relation; see Section~\ref{sec:spacetime:causalstruct}).
However, it has only been shown in two-dimensional Minkowski spacetime that the temporal logic of the after relation is decidable \cite{Hirsch2018}.
Whilst the frames of Minkowski spacetime with the chronological and causal relation are complete with respect to their logics, it is unknown what the modal logic of Minkowski spacetime with the after relation is, and whether it is complete, although previous works have noted it has distinct properties~\cite{Goldblatt1980, Shapirovsky2005}. Mainly, it was proved that the modal logic of the after relation satisfies a formula we call the \emph{after formula}~(Section~\ref{sec:afterformula}).

To the best of our knowledge, the modal logics of general spacetimes has not been investigated.\footnote{Though there are related approaches such as via domain theory~\cite{martin2006DomainSpacetimeIntervals} or orthomodular lattices~\cite{casini2002LogicCausallyClosed}.}
What is the modal logic of a general spacetime?
Some properties can be derived from the general differential geometric setup, such as that $\chron$ is transitive and 2-dense ($\mathbf{OI}$) and $\caus$ is a preorder ($\mathbf{S4}$).
One of the main difficulties in providing a more refined characterisation lies in the fact that, even though spacetimes locally behave like Minkowski space (locally inheriting the logic), general spacetimes can vary greatly in global topology and causal structure. Our main result~(Theorem~\ref{thm:spacetimeafter}) proves that the after relation of \emph{any} spacetime satisfies the after formula. Moreover, we study how the modal logics of the spacetime vary along where it sits on the so-called causal ladder~(Sections~\ref{sec:causalladder} and~\ref{sec:modalcausalladder}).


\section{Spacetimes and Causal Structure}
\label{sec:spacetime}\label{sec:spacetime:causalstruct}
We recall the basic definitions of \emph{causal structure} in smooth spacetimes~\cite{penrose1972TechniquesDifferentialTopology, landsman2021FoundationsGeneralRelativity}.
Formally, a \emph{spacetime} is a (real second countable Hausdorff) smooth manifold~$M$ equipped with a \emph{Lorentzian metric}~$g$ and a \emph{time orientation}~$T$. The metric defines at each point~$x\in M$ a nondegenerate symmetric bilinear form~$g_x\colon T_xM\times T_xM\to \mathbb{R}$ on the tangent space, to be thought of as a generalised inner product. The difference is that~$g_x$ need not be positive definite, meaning non-zero vectors may have zero or even negative length. The Lorentzian signature is interpreted as one \emph{time dimension} and $n$ \emph{spatial dimensions}. The time orientation $T$ is a global smooth vector field that determines a consistent notion of future.

The fundamental \emph{causal relations} are defined in terms of certain classes of curves induced by the metric (see Figure~\ref{fig:mink2}). \change{For our purposes a \emph{curve} is a continuous (piecewise) smooth function~$\gamma\colon I\to M$ defined on an interval~$I\subseteq \mathbb{R}$, and its \emph{image} is the set~${\im(\gamma)= \{\gamma(t):t\in I\}}$.} For $x,y\in M$ we define the \emph{causal(ity)}, \emph{chronology} and \emph{horismos relation} as follows, respectively:
\begin{itemize}
    \item $x\caus y$ iff there exists a causal curve (\emph{moving up to lightspeed}) from $x$ to $y$, or $x=y$;
    \item $x\chron y$ iff there exists a timelike curve (\emph{moving slower than lightspeed}) from $x$ to $y$;
    \item $x\horismos y$ iff there exists a lightlike curve (\emph{moving at lightspeed}) from $x$ to $y$, or $x=y$ (\ie{}~$x\caus y$ and not $x \chron y$).
\end{itemize}
\change{For more detailed definitions see~\cite[\S 1.11]{minguzzi2019LorentzianCausalityTheory}.}
Additionally, we define the \emph{after} (or \emph{strictly causal}) and \emph{reflexive chronological relation} as follows, respectively:
\begin{itemize}
    \item $x \after y$ iff there exists a causal curve from $x$ to $y$ (read as ``$y$ happens after $x$'');
    \item $x \chroneq y$ iff there exists a timelike curve from $x$ to $y$, or $x=y$.
\end{itemize}

The after relation, introduced in \cite{robb1914TheoryTimeSpace}, is the `strict' version of $\caus$, since causal curves are by definition non-degenerate, only allowing a reflexive pair $x\after x$ when there exists a causal loop from $x$ to $x$.
We say~$x$ and~$y$ are \emph{spacelike separated} if not $x\caus y$ and not $y\caus x$.

\begin{example}[Minkowski spacetime]\label{example:minkowski space}
An important spacetime is $1+n$-dimensional \emph{Minkowski space}, based on the manifold $\mathbb{R}^{1+n}$. We denote a point $x\in \mathbb{R}^{1+n}$ in components as ${(x^0,\vec{x})=(x^0,x^1,\ldots,x^n)}$. The metric and induced causal relations are then as follows (see~Figure~\ref{fig:mink2} for intuition):
\begin{equation}
    \eta(x,y) = -x^0y^0 + \sum_{i=1}^n x^iy^i,
    \quad\text{and}\quad
    \begin{array}{cc}
        x \chron y 
    & \quad \text{iff}\quad
    y^0 > x^0 + \norm{\vec{y}-\vec{x}},
    \\
    x \caus y
    &\quad\text{iff}\quad
    y^0 \geq x^0 + \norm{\vec{y}-\vec{x}},
    \\
    x \horismos y
    &\quad\text{iff}\quad
    y^0 = x^0 + \norm{\vec{y}-\vec{x}}.
    \end{array}
    \label{eq:minkowski causal relations}
\end{equation}
\end{example}

\begin{figure}[bh]
    \centering
    \begin{subfigure}[t]{.3\textwidth}
        \centering
        \begin{tikzpicture}[scale=.9]
        \def\x{0}
        \def\s{1.2}
        \fill[white] (\x,0) -- (\x+\s,\s) -- (\x-\s,\s);
        \fill[black] (\x,0) -- (\x+\s,-\s) -- (\x-\s,-\s);
        \draw (\x,0) -- (\x+\s,\s);
        \draw (\x,0) -- (\x-\s,\s);
        \draw (\x-\s,\s) -- (\x+\s,\s);
        \draw (\x,0) -- (\x+\s,-\s);
        \draw (\x,0) -- (\x-\s,-\s);
        \tikzminkowski;
        \end{tikzpicture}
    \end{subfigure}
    \hspace{5mm}
    \begin{subfigure}[t]{.3\textwidth}
        \centering
        \begin{tikzpicture}
        \def\x{0}
        \def\s{2}
        \def\diag{dashed}
        \clip (-2,-.4) rectangle + (5,3);
        \fill[white] (\x,0) -- (\x+\s,\s) -- (\x-\s,\s);
        \fill[white] (\x,0) -- (\x+\s,-\s) -- (\x-\s,-\s);
        \draw[\diag] (\x,0) -- (\x+\s,\s);
        \draw[\diag] (\x,0) -- (\x-\s,\s);
        \draw[\diag] (\x,0) -- (\x+\s,-\s);
        \draw[\diag] (\x,0) -- (\x-\s,-\s);
        \node [] (y) at (0.5, 1.3) {};
        \draw[postaction={decorate},decoration={markings,mark=at position 0.5 with {\arrow{>}}}] (\x,0) to [bend left=10] (y.center) to (.9,2);
        \node [] (z) at (-0.7, 1.5) {};
        \node [] (l) at (-1, 1) {};
        \draw[decorate, decoration={snake, segment length=6pt, amplitude=2pt}] (\x,0) to (l.center);
        \draw[postaction={decorate},decoration={markings,mark=at position 0.5 with {\arrow{>}}}] (l.center) to [bend right = 10] (z.center) to [bend right=10] (-.8,2);
        \tikzminkowski;
        \node at (-1.4,.4) {\small lightlike};
        \node [fill=white,inner sep=1pt] at (1.4,1.4) {\small timelike};
        \node [fill=white,inner sep=1pt] at (-1.5,1.6) {\small causal};
        \end{tikzpicture}
    \end{subfigure}
    \caption{Two-dimensional Minkowski spacetime and its causal structure.
    \change{Timelike curves move strictly inside of the cone, with tangent strictly greater than~$45^\circ$ from the horizontal axis at every point. Lightlike curves move precisely at~$45^\circ$, while causal curves are a mixture of the two.}}
    \label{fig:mink2}
\end{figure}

\subsection{Properties of the causal relations}
A central theme of this work is characterising the (modal) properties of the causal relations of a spacetime. We first collect some well-established results from the literature~\cite{penrose1972TechniquesDifferentialTopology, kronheimer1967StructureCausalSpaces}.

\begin{definition}
    A relation ${\genericrelation} \subseteq S \times S$ is called \emph{semi-full} if:
    \begin{itemize}
        \item (seriality) $\forall x \in S$, there exists $y \in S$ such that $x \genericrelation y$;
        \item (2-dense) if $x \genericrelation y_1, y_2$, there exists $z \in S$ such that $x \genericrelation z \genericrelation y_1, y_2$.
    \end{itemize}
\end{definition}

\begin{lemma}\label{lemma:relations between causal orders}\label{lem:relcommonproperties}
    We have the following relations and properties of the causal orders:
    \begin{multicols}{2}
    \begin{itemize}
        \item $x\chron y \implies x\after y \implies x\caus y$;
        \item if $x\neq y$ then $x\after y$ iff $x\chron y$ or $x\horismos y$;
        \item[]
    \end{itemize}
\columnbreak
    \begin{itemize}
        \item $\chron$ is semi-full, $\after$ is serial;
        \item $\chron$, $\chroneq$, $\after$, $\caus$ are transitive;
        \item $\caus$, $\chroneq,\horismos$ are reflexive. 
    \end{itemize}
\end{multicols}
\end{lemma}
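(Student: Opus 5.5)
Most items follow directly from the curve-based definitions, so I will go through them in order and reduce everything to facts about concatenating and reparametrising curves, plus the local Minkowski structure of a spacetime.

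\textbf{Inclusions and the case $x\neq y$.} Every timelike curve is causal, so $x\chron y$ implies $x\after y$. By definition a causal curve from $x$ to $y$ witnesses $x\caus y$, so $x\after y$ implies $x\caus y$. For the second item, take $x\neq y$. If $x\after y$, then $x\caus y$ through a genuine curve. Either $x\chron y$ holds, or it does not, and in the latter case $x\horismos y$ holds by its characterisation as $x\caus y$ and not $x\chron y$. Conversely, $x\chron y$ gives $x\after y$ by the first item. If $x\horismos y$ with $x\neq y$, the reflexive clause cannot apply, so a lightlike (hence causal) curve exists and $x\after y$.

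\textbf{Transitivity and reflexivity.} The concatenation of two causal curves, with the endpoint of the first equal to the start of the second, is again a piecewise smooth future-directed causal curve, and likewise for timelike curves. This gives transitivity of $\after$ and $\chron$ directly. For $\caus$ and $\chroneq$ I also handle the degenerate reflexive cases $x=y$, which are trivial. Reflexivity of $\caus$, $\chroneq$ and $\horismos$ is built into the definitions.

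\textbf{Seriality and 2-density.} Seriality of $\chron$ comes from the time orientation. At any $x$, the integral curve of the timelike vector field $T$ through $x$ is a timelike curve defined on some interval $[0,\varepsilon)$, and any later point $y$ on it satisfies $x\chron y$. Seriality of $\after$ then follows from the first item.

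The main obstacle is 2-density of $\chron$. Suppose $x\chron y_1$ and $x\chron y_2$, witnessed by timelike curves $\gamma_1$ and $\gamma_2$. My plan is to use the openness of chronology, namely that $I^-(y)=\{z : z\chron y\}$ is open. This is standard and follows from the fact that timelike curves can be perturbed: in a convex normal neighbourhood, the relation locally looks like the Minkowski relation \eqref{eq:minkowski causal relations}, where the strict inequality is open. Then $U=I^-(y_1)\cap I^-(y_2)$ is an open set containing $x$. Take $z$ on the integral curve of $T$ starting at $x$, close enough to $x$ that $z\in U$. Then $x\chron z$ (seriality construction) and $z\chron y_1$, $z\chron y_2$, as required. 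The delicate point is justifying openness of $I^-$. I would cite it from \cite{penrose1972TechniquesDifferentialTopology}, or sketch it: replace the initial segment of $\gamma_i$ by a timelike segment from any nearby $z'$, using the local Minkowski chart, and join it to a later point of $\gamma_i$.
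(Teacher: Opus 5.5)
The paper gives no proof of this lemma. It presents the items as well-established facts and defers to Penrose and Kronheimer--Penrose, so your proposal supplies an argument the paper omits rather than departing from one.

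Your argument is correct and follows the standard lines:
\begin{itemize}
\item the inclusions, transitivity and reflexivity are definitional, via concatenation of future-directed causal (resp.\ timelike) curves;
\item seriality comes from integral curves of $T$;
\item 2-density reduces to openness of $I^-(y_1)\cap I^-(y_2)$.
\end{itemize}
This is the same openness of chronological cones that the paper itself uses without proof in Theorem~\ref{thm:spacetimeafter}. The benefit of writing the argument out is that openness of $I^{\pm}$ is isolated as its only non-elementary input.

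There are two points of precision.

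\emph{Openness of $I^-$.} In your sketch, centre the normal coordinates at a point $w=\gamma_i(s)$ near $x$, lying inside a convex neighbourhood of $x$, not at $x$ itself. Normal coordinates only encode causal relations to and from their basepoint. With this choice, $\{p:\Phi_w(p)\chron 0\}$ is an open neighbourhood of $x$ contained in $I^-(w)\subseteq I^-(y_i)$.

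\emph{Reading of $\horismos$.} You use the parenthetical characterisation `$x\caus y$ and not $x\chron y$' for the second item, but the `$x=y$' clause for reflexivity of $\horismos$. These fit together only under the reading: $x\horismos y$ iff $x=y$ or ($x\caus y$ and not $x\chron y$). The paper's two formulations are not literally equivalent.
\begin{itemize}
\item With the parenthetical alone, $\horismos$ fails to be reflexive at any point with $x\chron x$.
\item With `there is a lightlike curve' as the definition, the forward direction of the second item is no longer definitional. It then needs Theorem~\ref{thm:2.22}: a causal curve between points that are not chronologically related is a null geodesic.
\end{itemize}
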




\begin{proposition}[Push-up Rule]
We have that for $x, y, z \in \genericworld$:
\[
x\chron y \caus z \implies x\chron z
\qquad\text{and}\qquad
x\caus y \chron z \implies x\chron z.
\]
\end{proposition}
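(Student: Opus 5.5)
The plan is to reduce the push-up rule to a purely local statement about curves, namely that a causal curve which is not a null geodesic (and in particular a concatenation of a timelike curve with a causal curve) can be deformed, with endpoints fixed, into a timelike curve. Both implications are symmetric under reversing time orientation, so it suffices to prove the first, $x\chron y\caus z\implies x\chron z$; the second follows by applying the first to the spacetime with reversed time orientation $-T$, under which $\chron$ and $\caus$ are converted to their converses. If $y=z$ there is nothing to prove, so assume there is a timelike curve $\gamma_1$ from $x$ to $y$ and a causal curve $\gamma_2$ from $y$ to $z$. Their concatenation $\gamma$ is a piecewise smooth causal curve from $x$ to $z$ with a nontrivial timelike segment.

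The key step is the local deformation. First I treat the case where everything lies in a single convex normal neighbourhood $U$: there the causal relations of $U$ are governed by the exponential map, and the standard fact (cf.\ \cite[Prop.~4.5.10]{penrose1972TechniquesDifferentialTopology} or the corresponding results in \cite{minguzzi2019LorentzianCausalityTheory}) is that $q\in J^+_U(p)\setminus I^+_U(p)$ holds only when $q$ is reached from $p$ by a null geodesic, and that any causal curve in $U$ from $p$ to $q$ that is not (a reparametrisation of) that null geodesic yields $p\chron_U q$. Applied to a subsegment of $\gamma$ containing the point $y$ with a timelike piece before it, the subsegment is not a null geodesic, so its endpoints are chronologically related in $U$.

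To globalise, I cover the compact image of $\gamma$ by finitely many convex normal neighbourhoods and choose a partition $x=p_0,p_1,\dots,p_k=z$ of $\gamma$ with consecutive points in a common convex neighbourhood. I then propagate the timelike relation along the chain: starting with $x\chron p_j$ for the first $p_j$ past $y$ (from the local step, or directly since $\gamma_1$ is timelike), and at each stage having $x\chron p_i$ via some timelike curve ending in $p_i$, I use that $I^+(x)$ is open together with local openness: since $p_i\caus p_{i+1}$ within a convex neighbourhood, and some point $p_i'$ on the timelike curve slightly before $p_i$ satisfies $x\chron p_i'\chron p_i$, the local statement gives $p_i'\chron p_{i+1}$ (the broken curve $p_i'\to p_i\to p_{i+1}$ has a timelike first leg and so is not a null geodesic). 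Transitivity of $\chron$ (Lemma~\ref{lem:relcommonproperties}) then gives $x\chron p_{i+1}$, and induction yields $x\chron z$.

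The main obstacle is the local lemma in the convex neighbourhood: showing that a broken causal curve with a timelike segment ends in the open chronological future. I would prove it via the Gauss lemma, writing $P(q)=g(\exp_p^{-1}q,\exp_p^{-1}q)$ and showing that along any future causal curve from $p$ the function $P$ is non-increasing, and strictly decreasing wherever the curve is timelike or not radial. Hence $P(q)<0$ with $q$ in the future cone, i.e.\ $p\chron q$. The remaining steps are routine bookkeeping.
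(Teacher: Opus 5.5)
Your argument is correct. The paper gives no proof of the push-up rule: it lists it among ``well-established results from the literature'' and cites Penrose and Kronheimer--Penrose, so the text contains no competing argument. What you wrote is essentially the classical textbook proof behind those citations. It uses the Gauss lemma in a convex normal neighbourhood, then a chain along a Lebesgue-number partition of the compact curve. Stepping back along the timelike curve to $p_i'$ guarantees a timelike first leg inside each neighbourhood.

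The paper's appendix does offer a much shorter route. Theorem~\ref{thm:2.22} says a causal curve between distinct points is either beaten by a timelike curve or is a maximising geodesic. Apply it to the concatenation of the timelike curve $x\to y$ with the causal curve $y\to z$. Since a (reparametrised) geodesic has constant causal character and this curve contains a timelike segment, you get $x\chron z$ either way. The case $x=z$ needs a small detour through an intermediate point $w\neq x$ on the timelike leg, because the theorem assumes distinct endpoints. That route is two lines but outsources all the geometry to Minguzzi's theorem; yours is self-contained down to the Gauss lemma.

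One caution about your local lemma. The identity $\tfrac{d}{dt}P(c(t))=2g(R,\dot c)$ has a definite sign only once you know $\exp_p^{-1}c(t)$ is future causal. For a curve leaving $p$ in a null direction this is not automatic for small $t$, so ``$P$ is non-increasing along \emph{any} future causal curve from $p$'' needs more than the sketch suggests. You never need that case, though, because every curve you feed into the lemma begins with a timelike leg:
\begin{itemize}
\item[(i)] $\exp_p^{-1}c(t)=t\dot c(0)+o(t)$ lies in the open future timelike cone for small $t>0$;
\item[(ii)] $R$ is therefore future timelike, so $g(R,\dot c)<0$ throughout;
\item[(iii)] $P$ stays strictly negative, so $\exp_p^{-1}c(t)$ can never reach the boundary of the future timelike cone.
\end{itemize}
That is exactly what your induction uses.
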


\subsection{The Causal Ladder}
\label{sec:causalladder}
\begin{figure}[t]
    \centering
    \begin{tikzpicture}[
        every path/.style={implies-, double equal sign distance}, 
        every node/.style={}, 
        node distance = 3mm and 5mm,
    ]
        \node[] (ntv) {Non-totally Vicious};
        \node[below = of ntv, minimum width = 3cm] (bntv) {};
        \node[right = of bntv, minimum width = 2.5cm] (chron) {Chronological};
        \node[left = of bntv, minimum width = 2.5cm] (cntv) {\textbf{Causally Non-totally Vicious (cNTV)}};
        \node[below = of bntv] (cntvchron) {\textbf{cNTV and Chronological}};
        \node[below = of cntvchron] (caus) {Causal};
        \node[below = of caus] (dist) {(Past-/Future-) Distinguishing};
        
        \draw (ntv.south east) -- (chron.north west);
        \draw (ntv.south west) -- (cntv.north east);
        \draw (chron.south west) -- ([shift={(1.6,0)}]cntvchron.north);
        \draw (cntv.south east) -- ([shift={(-1.6,0)}]cntvchron.north);
        \draw (cntvchron) -- (caus);
                
        \node[below = of dist] (strongly) {Strongly Causal};
        \node[below = of strongly] (stably) {Stably Causal};
        \node[below = of stably] (gh) {Globally Hyperbolic};
        \draw (caus) -- (dist);
        \draw (dist) -- (strongly);
        \draw (strongly) -- (stably);
        \draw (stably) -- (gh);
    \end{tikzpicture}
    \caption{Modified Causal Ladder (our additions in bold).}
    \label{fig:causalladder}
\end{figure}

The causal ladder~\cite{minguzzi2019LorentzianCausalityTheory} provides a classification of spacetimes based on the properties of their causal relations. A basic version of it, sufficient for the present work, is depicted in Figure~\ref{fig:causalladder} (with the weakest property at the top).
Below we define the properties needed in this work (see Section~\ref{sec:modalcausalladder}); for more disambiguated versions see \eg{}~\cite[Figure 8]{minguzzi2019LorentzianCausalityTheory}. Let $J^\pm$ and $I^\pm$ denote the up/down closures of $\caus$ and~$\chron$, respectively. So $J^{+}(x):=\{y\in M:x\caus y\}$ and $J^{-}(x):= \{ z\in M: z\caus x\}$, and similarly for $I^\pm$.

\begin{definition}
    A spacetime is called:
    \begin{itemize}
        \item \emph{totally vicious} if $x\chron x$ for all $x\in M$ (\ie{}~$\chron$ is \emph{reflexive}).
        A spacetime is \emph{non-totally vicious} (NTV) if there is some $x \in M$ such that not $x \chron x$ (\ie{}~$\chron$ is not reflexive);
        
        \item \emph{chronological} if not $x \chron x$ for all $x \in M$ (\ie{}~$\chron$ is \emph{irreflexive});

        \item \emph{causal} if $x \caus y$ and $y \caus x$ imply $x = y$ (\ie{}~$\caus$ is \emph{anti-symmetric});

        \item \emph{past-distinguishing} (resp.~\emph{future-distinguishing}) if $I^{-}(x) = I^{-}(y)$ (resp.~$I^{+}(x) = I^{+}(y)$) implies $x = y$. If a spacetime is both past- and future-distinguishing it is called \emph{distinguishing};

        \item \emph{globally hyperbolic} if the \emph{causal diamonds} $J^{+}(x)\cap J^{-}(y)$ are compact in the manifold topology for all $x,y\in \genericworld$.
    \end{itemize}
\end{definition}

For proofs of implication between the different properties, see \cite{Minguzzi08,minguzzi2019LorentzianCausalityTheory}.
We note that the implications are strict, \eg{} there are chronological spacetimes that are not causal.
The study of the causal ladder and the underlying dependencies is one of the cornerstones of modern mathematical relativity theory.

We provide two examples of spacetime that reside on different levels of the causal ladder, but we will see other examples throughout the paper.

\begin{example}
    Minkowski space is globally hyperbolic.
\end{example}

\begin{example}\label{ex:Misner spacetime}
The spacetime depicted in Figure~\ref{fig:cntvchron:cntv} is a non-totally vicious spacetime.
The manifold is $\mathbb{R} \times S^1$, the surface of an infinitely long unit cylinder, and the lightcones are tilted from $90^\circ$ to being upright at $0^\circ$. This is known as the \emph{Misner spacetime}, see~\cite[Example 4.28]{minguzzi2019LorentzianCausalityTheory} for details. It is non-totally vicious because points on the dotted line, $(0, \theta)$, cannot chronologically reach themselves (even though causally they can).
\end{example}

\subsection{Causally Non-Totally Vicious: A New Property for the Causal Ladder}
\label{sec:spacetime:cntv}
Causal ladder properties can have multiple and equivalent definitions.
In particular, the causal step has an interesting equivalent condition.

\begin{lemma}
    Let $\genericworld$ be a spacetime.
    Then $\genericworld$ is causal, \ie{}~$\caus$ is anti-symmetric, iff $\after$ is irreflexive.
\end{lemma}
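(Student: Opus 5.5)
We prove the two directions separately, working directly from the curve-based definitions of $\caus$ and $\after$. The only facts needed are that causal curves can be concatenated, that a causal curve is non-degenerate (it has nonzero future-directed causal tangent, so its image is not a single point), and that $x\after y$ implies $x\caus y$ (Lemma~\ref{lem:relcommonproperties}).

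\emph{($\Rightarrow$)} Suppose $\caus$ is anti-symmetric and, for contradiction, that $x\after x$ for some $x$. Then there is a causal curve $\gamma\colon[a,b]\to\genericworld$ with $\gamma(a)=\gamma(b)=x$. Since $\gamma$ is non-degenerate, there is some $t\in(a,b)$ with $y:=\gamma(t)\neq x$.
\begin{itemize}
    \item Restricting $\gamma$ to $[a,t]$ gives a causal curve from $x$ to $y$, so $x\caus y$.
    \item Restricting $\gamma$ to $[t,b]$ gives a causal curve from $y$ to $x$, so $y\caus x$.
\end{itemize}
Anti-symmetry then forces $x=y$, a contradiction. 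The one point needing care is the existence of such a $t$. A causal curve has nonvanishing tangent on each smooth piece, so it is locally injective near any parameter where it is differentiable. Hence it cannot be constant on the whole interval, and some intermediate point differs from $x$.

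\emph{($\Leftarrow$)} Suppose $\after$ is irreflexive, and let $x\caus y$ and $y\caus x$. If $x\neq y$, neither relation holds by the reflexive clause in the definition of $\caus$. So there are causal curves $\gamma_1$ from $x$ to $y$ and $\gamma_2$ from $y$ to $x$. Their concatenation, after reparametrising onto a single interval, is continuous and piecewise smooth, and each smooth piece is future-directed causal. It is therefore a causal curve from $x$ to $x$, so $x\after x$, contradicting irreflexivity. Hence $x=y$.

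The main obstacle is mostly bookkeeping: checking that concatenation and restriction preserve the paper's notion of causal curve. For restriction, in the forward direction we choose $t$ strictly inside the interval, so both restricted pieces are genuine non-degenerate curves.
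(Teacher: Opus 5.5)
Your proof is correct and is essentially the standard argument. The paper states this lemma without proof, treating it as the well-known equivalence between the usual definition of a causal spacetime (no closed causal curves, i.e.\ $\after$ irreflexive) and anti-symmetry of $\caus$, and your restriction/concatenation argument is exactly the routine reasoning behind that equivalence (the existence of $t$ with $\gamma(t)\neq x$ needs even less than local injectivity, since a constant curve would have zero tangent).
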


Both the chronological and causal properties on the causal ladder are affecting the spacetime in a similar way, but affect different relations.
With the chronological relation, $\chron$, there is a difference between losing reflexivity (being non-totally vicious) and gaining irreflexivity (being chronological).
However, the causal property in the ladder combines the loss of reflexivity and gaining of irreflexivity on $\alpha$ into one ladder property.
By changing the perspective of the causal property affecting $\after$, we can introduce a class of spacetimes that separates the notion of losing reflexivity and gaining irreflexivity.
Thus, we get a causal analogue of non-total vicious.

\begin{definition}
A spacetime is \emph{causally non-totally vicious (cNTV)} iff $\exists x \in \genericworld$ such that not $x \after x$.
\end{definition}

\begin{figure}[t]
    \centering
    \begin{subfigure}[t]{.45\textwidth}
        \centering
        \scalebox{.8}{
\begin{tikzpicture}
    \node[cylinder, draw, shape border rotate = 90, aspect=3, minimum height = 6cm, minimum width=2cm] (A) {};
    \draw[dashed]
    let \p1 = ($ (A.after bottom) - (A.before bottom) $),
        \n1 = {0.5*veclen(\x1,\y1)-\pgflinewidth},
        \p2 = ($ (A.bottom) - (A.after bottom)!.5!(A.before bottom) $),
        \n2 = {veclen(\x2,\y2)-\pgflinewidth}
  in
    ([xshift=-\pgflinewidth] A.before bottom) arc [start angle=0, end angle=180,
    x radius=\n1, y radius=\n2];

    \node[ellipse,draw,dotted,minimum width=2cm, minimum height=.7cm] at (0,.35) {};
    \newcommand{\lightconedistance}{.35}
    \coordinate (p) {};
    \draw[fill=white] (p) arc (-90:-60:.6 and .15) -- (0,\lightconedistance) -- (p);
    \draw[fill] (p) arc (-90:-120.5:.6 and .15) -- (0,-\lightconedistance) -- (p);
    
\end{tikzpicture}
}
        \caption{A chronological (non-causal) spacetime that is not cNTV.
        Every point acts lightlike on a slice of the cylinder allowing every point to (causally) loop around to itself, but a point cannot reach itself in a timelike way.}
        \label{fig:cntvchron:chron}
    \end{subfigure}
    \hfill
    \begin{subfigure}[t]{.45\textwidth}
        \centering
        \scalebox{.8}{
\begin{tikzpicture}









\node[cylinder, draw, shape border rotate = 90, aspect=3, minimum height = 6cm, minimum width=2cm] (A) {};
\draw[dashed]
let \p1 = ($ (A.after bottom) - (A.before bottom) $),
    \n1 = {0.5*veclen(\x1,\y1)-\pgflinewidth},
    \p2 = ($ (A.bottom) - (A.after bottom)!.5!(A.before bottom) $),
    \n2 = {veclen(\x2,\y2)-\pgflinewidth}
in
([xshift=-\pgflinewidth] A.before bottom) arc [start angle=0, end angle=180,
x radius=\n1, y radius=\n2];

\node[ellipse,draw,dotted,minimum width=2cm, minimum height=.7cm] at (0,.35) {};

\newcommand{\lightconedistance}{.35}
\coordinate (p) {};
\draw[fill=white] (p) arc (-90:-60:.6 and .15) -- (0,\lightconedistance) -- (p);
\draw[fill] (p) arc (-90:-120.5:.6 and .15) -- (0,-\lightconedistance) -- (p);

\coordinate (fp) at (0,1.5);
\coordinate (pp) at (0,-1.5);

\draw[fill=white] (fp) -- ([shift={(.25, .25)}]fp) -- ([shift={(-.25, .25)}]fp) -- (fp);
\draw[fill] (fp) -- ([shift={(.25,-.25)}]fp) -- ([shift={(-.25,-.25)}]fp) -- (fp);

\draw[fill=white] (pp) -- ([shift={(.25, .25)}]pp) -- ([shift={(.25,-.25)}]pp) -- (pp);
\draw[fill] (pp) -- ([shift={(-.25, .25)}]pp) -- ([shift={(-.25,-.25)}]pp) -- (pp);
\end{tikzpicture}
}
        \caption{A cNTV (non-causal) spacetime that is not chronological~\cite[Fig. 10]{minguzzi2019LorentzianCausalityTheory}.
        The points above the boundary are irreflexive in $\alpha$ but the points below the boundary are reflexive in $\after$ and $\chron$.}
        \label{fig:cntvchron:cntv}
    \end{subfigure}
        \caption{Spacetimes for Lemma~\ref{lem:cntvnotchron}.}
    \label{fig:cntvchron}
\end{figure}

The cNTV property makes $\alpha$ lose reflexivity and now the causal property only gives $\alpha$ the irreflexive property, separating the relational properties into individual steps on the causal ladder.
There is an obvious question to ask: where does cNTV sit on the causal ladder?
\begin{proposition}
    A causal spacetime is cNTV, 
    and a cNTV spacetime is NTV. 
    \label{prop:causcntvntv}
\end{proposition}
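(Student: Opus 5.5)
Both implications reduce to elementary facts about the relations, using the preceding lemma (causal iff $\after$ irreflexive) and Lemma~\ref{lem:relcommonproperties} ($x\chron y \implies x\after y$). No geometry beyond these is needed.

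\emph{Causal implies cNTV.} If $\genericworld$ is causal, the preceding lemma gives that $\after$ is irreflexive, so no $x\in\genericworld$ satisfies $x\after x$. Since a manifold is nonempty, we may pick any point $x$; then not $x\after x$, which is exactly the cNTV condition. The only subtlety is nonemptiness of $\genericworld$. We take this as implicit in the definition of spacetime, as also needed for the seriality claims in Lemma~\ref{lem:relcommonproperties}.

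\emph{cNTV implies NTV.} Suppose there is $x$ with not $x\after x$. We argue by contraposition at this point: if $x\chron x$, then Lemma~\ref{lem:relcommonproperties} gives $x\after x$, a contradiction. Hence not $x\chron x$, so $\chron$ is not reflexive and $\genericworld$ is NTV.

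There is no real obstacle here. The heaviest step is the equivalence between causality and irreflexivity of $\after$. That equivalence is the preceding lemma, which we assume. If it were unavailable, one would prove the needed direction directly: if $x\after x$ via a causal loop $\gamma$, choose a point $y\neq x$ on $\gamma$ (one exists because causal curves are non-degenerate). Splitting the loop at $y$ gives $x\caus y$ and $y\caus x$ with $x\neq y$, which contradicts antisymmetry of $\caus$.
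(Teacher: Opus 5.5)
Your proof is correct and matches the paper, which simply notes that both implications follow directly from the definitions. Specifically, causal means $\after$ is irreflexive (hence cNTV at any point), and $x\chron x \implies x\after x$ gives cNTV $\Rightarrow$ NTV. Your direct causal-loop argument for the needed direction of the causal/irreflexivity lemma is also sound.
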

The proof of these implications is simply based on the definitions of the properties and is obvious.

Then, what is the relation between a spacetime being cNTV and it being chronological?
\begin{theorem}
    We have that
    \begin{enumerate}[label=(\roman*)]
        \item there are chronological spacetimes that are not cNTV (Figure~\ref{fig:cntvchron:chron});
        \item there are cNTV spacetimes that are not chronological (Figure~\ref{fig:cntvchron:cntv}).
    \end{enumerate}
    \label{lem:cntvnotchron}
\end{theorem}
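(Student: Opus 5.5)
Both parts are existence claims, so I will construct one explicit metric on the cylinder $\mathbb{R}\times S^1$ for each, with coordinates $(t,\theta)$, and check the two relevant reflexivity properties directly.

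\emph{(i) Chronological but not cNTV.} I would take the metric $g = -2\,dt\,d\theta + dt^2$ (or $g = 2\,dt\,d\theta - \cos^2\!\ldots$; the simplest choice is the constant flat metric $g=-dt\,d\theta$ suitably adjusted), where one null direction at every point is $\partial_\theta$, tangent to the circles $\{t\}\times S^1$. Concretely, take $g = dt^2 - 2\,dt\,d\theta$ with time orientation chosen so that $\partial_\theta$ and $\partial_t + \tfrac12\ldots$ lie in the closed future cone. Because the metric is translation invariant in $\theta$, it descends to $\mathbb{R}\times S^1$. The null vector $\partial_\theta$ is future-directed and gives the closed lightlike curve $\theta\mapsto (t_0,\theta)$ through every point, so $x\after x$ for all $x$, and the spacetime is not cNTV.

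For chronology, I will show that $t$ is nondecreasing along future causal curves and strictly increasing along timelike ones. To do this I compute the future cone: a vector $v=a\partial_t+b\partial_\theta$ is causal iff $a^2-2ab\le 0$, and the future component is $\{a\ge 0,\ b\ge a/2\}$. Timelike vectors then have $a>0$. Hence along a timelike curve $t$ strictly increases, so no timelike curve can return to its starting point and $x\not\chron x$ for all $x$. This is the only step that needs care: fixing the signs and the time orientation so that $\partial_\theta$ really is on the boundary of the future cone and every timelike vector has $dt>0$.

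\emph{(ii) cNTV but not chronological.} I would follow Misner's example (Example~\ref{ex:Misner spacetime}), with the metric $g = -2\,dt\,d\theta - t\,d\theta^2$ on $\mathbb{R}\times S^1$. For $t<0$ the vector $\partial_\theta$ has $g(\partial_\theta,\partial_\theta)=-t>0$... To get the cones oriented the way the figure shows, I flip the sign so that $\partial_\theta$ is timelike for $t<0$, null at $t=0$, and spacelike for $t>0$. Then the circles at each $t<0$ are closed timelike curves, so $x\chron x$ there and the spacetime is not chronological.

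For cNTV, I will show that for $t>0$ the coordinate $t$ is strictly monotone along future causal curves, and that the region $t>0$ is future-invariant, meaning that once a causal curve enters $\{t\ge 0\}$ it never returns to $t<0$. This follows from the cone computation: $dt$ is nonnegative on the future cone wherever $t\ge0$, with equality only for the null direction $\partial_\theta$ at $t=0$. Consequently a future causal curve starting at a point with $t>0$ has strictly increasing $t$ and cannot close up, so such a point satisfies $\neg(x\after x)$. The main obstacle here is the boundary $t=0$, where the curve could creep along the null circle. I avoid it by choosing the point $x$ with $t>0$ strictly, and by checking that $dt(v)>0$ for every nonzero future causal $v$ on $t>0$.
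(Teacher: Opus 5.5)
Your proposal is correct and follows the paper's route. Your flat metric $dt^2-2\,dt\,d\theta$ is isometric to two-dimensional Minkowski space quotiented by a null translation, which is the spacetime of Figure~\ref{fig:cntvchron:chron}, and your sign-adjusted Misner metric is the spacetime of Figure~\ref{fig:cntvchron:cntv}; your cone computations supply the verifications the paper leaves to the pictures. In the write-up, state the final metric $-2\,dt\,d\theta+t\,d\theta^2$ and a time orientation such as $(1+\tfrac{t}{2})\partial_t+\partial_\theta$ explicitly, because your claim that $dt\geq 0$ on the future cone for $t\geq 0$ depends on that choice (your decisive step, strict increase of $t$ along causal curves from a point with $t>0$, then goes through as you describe).
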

\begin{proof}
The proof of both can be observed in the spacetimes depicted in Figure~\ref{fig:cntvchron}, where each has one property but not the other. Both spacetimes use the same manifold, $\mathbb{R} \times S^{1}$. The metric of Figure~\ref{fig:cntvchron:chron} is just quotiented Minkowski space, and Figure~\ref{fig:cntvchron:cntv} is Misner spacetime from Example~\ref{ex:Misner spacetime}.
\end{proof}


        

\begin{remark}
\label{rmk:cntvchron}
It is clear that a causal spacetime is cNTV and chronological, but we further have that a spacetime that is cNTV and chronological is not necessarily causal.
A counterexample can be observed by taking the spacetime in Figure~\ref{fig:cntvchron:chron} and removing a point from it.
\end{remark}

The cNTV property changes one end of the causal ladder to look different to the original ladder.
Our additions to the causal ladder in Figure~\ref{fig:causalladder} are in bold.
In Section~\ref{sec:modalcausalladder}, we will see how introducing the cNTV property into the causal ladder will make a clearer distinction of the modal logic of a spacetime to its position on the causal ladder.

\shelf{comment on topological/spacetime variant of cNTV/chronological?}

\section{Modal Logic}
\label{sec:modal}
Now we introduce modal logic, which includes the standard logical connectives (conjuction, disjunction, \etc), atomic propositions, and modal operators ($\necc, \poss$).
Throughout, we assume a finite set of atomic propositions, $\atomicprops = \{p_1, p_2, \dots\}$.
We refer to \cite{Blackburn2001,BoxesAndDiamonds} for common definitions and theorems.
Throughout, we make use of Kripke frames, $\genericframe$, using $\genericworld$ to denote the set of points/worlds and $\genericrelation$ to denote the relation.
For models we use $\genericeval : \atomicprops \to 2^{\genericworld}$ to denote the evaluation function.
For a relation, $\genericrelation$, let $\genericrelation^{-1} = \{ (y,x) : (x, y) \in~\genericrelation\}$.
For $x \in \genericworld$, let $\genericrelation(x) = \{y \in \genericworld : x\genericrelation y\}$; and for $S \subseteq \genericworld$, let $\genericrelation(S) = \bigcup_{x \in S} \genericrelation(x)$.
We use the standard grammar of modal logic and entailment for a model ($\mentailgeneric{\phi}$).
We denote the Necessitation rule ($\frac{A}{\necc A}$) as (Nec) and we have the modal formula $K := \necc (A \rightarrow B) \rightarrow (\necc A \rightarrow \necc B)$ that holds in all frames.
Other common modal formula are defined along with their corresponding properties of frames:
\begin{equation*}
    \begin{aligned}
        a4 := \poss\poss A \rightarrow \poss A,~\text{transitivity}; 
        && aT := A \rightarrow \poss A,~\text{reflexivity}; \\
        aD := \poss \top,~\text{seriality}; 
        && ad := \poss A \rightarrow \poss\poss A,~\text{density}; \\
        ad_2 := \poss A \land \poss B \rightarrow \poss(\poss A \land \poss B),~\text{2-density}; 
        && a2 := \poss\necc A \rightarrow \necc \poss A,~\text{confluence.\footnotemark}
    \end{aligned}
\end{equation*}
\footnotetext{Confluence is also referred to as the Church-Rosser property, the diamond property, or being weakly directed.}

The smallest modal logic is $\mK$, which contains the standard propositional logic axioms, \emph{modus ponens}, $K$, and (Nec).
A normal modal logic is the smallest modal logic that consists of $\mK$; any extra (valid) formulae $a_1, a_2 \dots$; and any formula that can be obtained from applying the rules in $\mK$ on the formulae given.
The respective logic is denoted $\mK + a_1 + a_2 +\dots$~.

We denote some normal modal logics derived from the formulas given above:
\begin{equation*}
    \begin{aligned}
        & \mathbf{K4} := \mK + a4, & \mathbf{T} := \mK + aT, \\
        & \mathbf{D4} := \mathbf{K4}+ aD, & \mathbf{S4} := \mathbf{K4} + aT, \\
        & \mathbf{OI} := \mathbf{D4} + ad_2, & \bm{\Lambda.2} := \bm{\Lambda} + a2;
    \end{aligned}
\end{equation*}
where $\bm{\Lambda}$ is an arbitrary normal modal logic.
Each of these has a combination of the properties of the corresponding modal formula.

As standard, for a set of formulae, $\Gamma$, and a formula, $\phi$, let $\Gamma \vdash \phi$ mean that $\phi$ can be logically derived from $\Gamma$ ($\phi \in \mK + \Gamma$) and $\Gamma \nvdash \phi$ iff $\Gamma \vdash \lnot \phi$.
For a frame, $F = \genericframe$, we write $\modal{F}$ to mean the set of formula that are satisfied under all possible evaluations of $F$.
This set of formula could be some logic, $\bm{\Lambda}$, and we may write $\modal{F} = \bm{\Lambda}$.


We introduce Gabbay's Irreflexive Rule~\cite{Gabbay1981}.
The rule is
\begin{prooftree}
\AxiomC{$\lnot (p \rightarrow \poss p) \rightarrow \phi$}
\LeftLabel{$\gabb$}
\RightLabel{if $p$ does not occur in $\phi$.}
\UnaryInfC{$\phi$}
\end{prooftree}
For a normal modal logic $\bm{\Lambda}$ without reflexivity ($aT$), let $\bm{\Lambda} + \gabb$ be the smallest modal logic of $\bm{\Lambda}$ with the addition of being closed under $\gabb$.
A frame whose logic is $\bm{\Lambda} + \gabb$ is irreflexive.

\section{Spacetime and Modal Logic}
\label{sec:modalspacetime}
We begin by looking at modal logic within the context of a generic spacetime.
No assumptions are made on where the spacetime lies on the causal ladder.
We analyse the normal modal logics that act as a subset for the modal logic of spacetimes.

Given the definitions of the various causal relations, the relational properties they have (known from Lemma~\ref{lem:relcommonproperties}), and the properties on frames that normal modal logics describe; we have the following results on all spacetimes.
\begin{corollary}
    Let $\genericworld$ be a spacetime. Then, we have that
    \begin{enumerate}[label=(\roman*)]
        \item $\mathbf{S4} \subseteq \modal{\mframe{\genericworld}{\chroneq}}, \modal{\mframe{\genericworld}{\caus}}$;
        \item $\mathbf{OI} \subseteq \modal{\mframe{\genericworld}{\chron}}$;
        \item $\mathbf{D4} + ad \subseteq \modal{\mframe{\genericworld}{\after}}$.
    \end{enumerate}
    \label{cor:spacetimeclass}
\end{corollary}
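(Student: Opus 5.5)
The plan is to use the standard soundness direction of modal correspondence. For a normal modal logic $\mK + a_1 + \dots + a_k$ to be contained in $\modal{F}$, it suffices to check that each $a_i$ is valid on the frame $F$. Validity of $K$ and closure under modus ponens and (Nec) hold on every Kripke frame. Moreover, the sets $\modal{F}$ are closed under these rules and under uniform substitution. So in each case we only need the frame conditions corresponding to the added axioms, and these are exactly what Lemma~\ref{lem:relcommonproperties} supplies.

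For (i), we need $a4$ and $aT$ valid on $\mframe{\genericworld}{\chroneq}$ and $\mframe{\genericworld}{\caus}$. Lemma~\ref{lem:relcommonproperties} states that $\chroneq$ and $\caus$ are both transitive and reflexive. Transitivity validates $a4$: if $x \genericrelation y \genericrelation z$ with $z$ satisfying $A$, then $x \genericrelation z$. Reflexivity validates $aT$: take the witness $x$ itself.

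For (ii), $\mathbf{OI} = \mK + a4 + aD + ad_2$. Transitivity of $\chron$ gives $a4$. Seriality, which is part of semi-fullness, gives $aD$. For $ad_2$, suppose $x \models \poss A \land \poss B$, witnessed by $x \chron y_1 \models A$ and $x \chron y_2 \models B$. 2-density gives $z$ with $x \chron z \chron y_1, y_2$, so $z \models \poss A \land \poss B$ and hence $x \models \poss(\poss A \land \poss B)$.

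For (iii), transitivity and seriality of $\after$ from Lemma~\ref{lem:relcommonproperties} give $a4$ and $aD$. The remaining axiom $ad$ requires density of $\after$: whenever $x \after y$ there is $z$ with $x \after z \after y$. This is not listed in the lemma and is the only step requiring an argument. Take a causal curve $\gamma\colon[a,b]\to\genericworld$ from $x$ to $y$, and set $z=\gamma(c)$ for some $a<c<b$. The restrictions $\gamma|_{[a,c]}$ and $\gamma|_{[c,b]}$ are non-degenerate causal curves from $x$ to $z$ and from $z$ to $y$, since restrictions of causal curves to nontrivial subintervals remain causal. Hence $x \after z \after y$. This also covers the reflexive case $x \after x$ coming from a causal loop. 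With density established, $ad$ is valid: if $x \after y \models A$, choose such a $z$, so that $z \models \poss A$ and $x \models \poss\poss A$.
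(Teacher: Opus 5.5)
Your proposal is correct and follows the paper's route: each added axiom is valid on the frame because of the matching relational property from Lemma~\ref{lem:relcommonproperties}, and frame validity is closed under the rules of a normal logic. Your curve-splitting argument for density of $\after$ fills in a property the paper relies on but does not list in that lemma.
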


This can be seen simply from the fact that the normal modal logics yield frames that have the associated properties and the causal relations have those properties.
For instance, $\mathbf{S4}$ yields frames that are transitive and reflexive, and $\caus$ is a transitive and reflexive relation.

\subsection{The After Formula}
\label{sec:afterformula}
In \cite{Shapirovsky2005}, it was shown that the formula, which we refer to as the \emph{after formula},
\begin{equation*}
    \afterformula := \poss ( \poss (p_1 \land \lnot p_2 \land \necc \lnot p_2)
                \land \poss (p_2 \land \lnot p_1 \land \necc \lnot p_1))
                \land \poss q
                \rightarrow
                \poss (\poss p_1 \land \poss q) \lor \poss (\poss p_2 \land \poss q),
\end{equation*}
holds in the modal logic of any $1+n$-dimensional Minkowski spacetime with the $\after$ relation.
This formula corresponds to the first-order property:
\begin{equation}
\label{eq:afterformula:firstorder}
\begin{aligned}
    \forall x, y, y_1, y_2, z
    &\Big[ \left(x \genericrelation y \land y \genericrelation y_1 \land y \genericrelation y_2 \land x \genericrelation z
    \land y_1 \neq y_2 \land \lnot(y_1  \genericrelation  y_2) \land \lnot(y_2  \genericrelation  y_1) \right)
    \\ &\rightarrow
    \exists t \left(x \genericrelation t \land t \genericrelation z \land (t \genericrelation y_1 \lor t \genericrelation y_2 ) \right) \Big].
    \footnotemark
\end{aligned}
\end{equation}
\footnotetext{Note $y_1 \neq y_2 \land \lnot(y_1  \genericrelation  y_2) \land \lnot(y_2  \genericrelation  y_1) \equiv y_1 \genericrelation^{\bowtie} y_2$ in \cite{Shapirovsky2005}.}
Observe that $\afterformula$ is a modified version of 3-density: if two (unrelated) points are related by a common point from the root and there is a third point that can be reached from the root, then at least one of the two points and the third point must share a common point from the root.
In fact, $\afterformula$ is a Sahlqvist formula (see \cite{Blackburn2001}), which provides properties \change{such as canonicity, first-order correspondence (as already demonstrated in Equation~\eqref{eq:afterformula:firstorder}), etc.}

We make the following observation of $\afterformula$ within the scope of normal modal logics:
\begin{lemma}
    We have
    \begin{enumerate*}[label=(\roman*), itemjoin = \quad]
        \item $\mathbf{D4.2} + ad \nvdash \afterformula$; \label{lem:aaf:d4.2d}
        \item $\mathbf{T} \nvdash \afterformula$; \label{lem:aaf:t}
        \item $\mathbf{K4} + ad_2 \vdash \afterformula$. \label{lem:aaf:k4d2}
    \end{enumerate*}
    \label{lem:afterformula}
\end{lemma}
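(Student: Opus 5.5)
The plan is to treat all three parts semantically, using Sahlqvist correspondence throughout. The after formula $\afterformula$ is Sahlqvist, so a frame validates it iff it satisfies the first-order condition~\eqref{eq:afterformula:firstorder}. Likewise $a4$, $aD$, $ad$, $ad_2$, $a2$ are Sahlqvist formulas and hence canonical. For (i) and (ii) I will exhibit finite frames on which the relevant logic is sound (so every theorem of the logic is valid there) but~\eqref{eq:afterformula:firstorder} fails. Since $\afterformula$ is then refuted on that frame under some valuation, it cannot be a theorem of the logic. For (iii) I will use canonicity (Sahlqvist completeness) of $\mathbf{K4}+ad_2$ and check the first-order condition directly.

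For (i), take $M=\{x,y,y_1,y_2,z,w\}$. Let every point except $x$ be reflexive. Set
\[
x\genericrelation y,\ x\genericrelation z,\ x\genericrelation y_1,\ x\genericrelation y_2,\quad y\genericrelation y_1,\ y\genericrelation y_2,
\]
and let every point be related to $w$ (with $w\genericrelation w$). Add nothing else.

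\begin{itemize}
\item \emph{Transitivity} is a routine check.
\item \emph{Seriality} holds because every point sees $w$.
\item \emph{Density} holds because every successor $u$ of any point is reflexive, so $v\genericrelation u\genericrelation u$.
\item \emph{Confluence} holds because any two successors share the successor $w$.
\end{itemize}

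So the frame validates $\mathbf{D4.2}+ad$. Now instantiate~\eqref{eq:afterformula:firstorder} with $x,y,y_1,y_2,z$: $y_1\neq y_2$ and they are mutually unrelated. The candidates $t$ with $x\genericrelation t$ are $y,z,y_1,y_2,w$.
\begin{itemize}
\item $t=z$ fails, since $z$ sees neither $y_1$ nor $y_2$.
\item $t\in\{y,y_1,y_2,w\}$ fails, since none of these sees $z$.
\end{itemize}
Hence~\eqref{eq:afterformula:firstorder} fails. Concretely, the valuation $V(p_1)=\{y_1\}$, $V(p_2)=\{y_2\}$, $V(q)=\{z\}$ refutes $\afterformula$ at $x$; the boxed conjuncts hold because $y_1$'s successors are $y_1,w$.

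For (ii), take the non-transitive reflexive frame on $\{x,y,y_1,y_2,z\}$ with all points reflexive and the additional pairs $x\genericrelation y$, $x\genericrelation z$, $y\genericrelation y_1$, $y\genericrelation y_2$. It validates $\mathbf{T}$. The successors of $x$ are $x,y,z$.
\begin{itemize}
\item $t=x$ fails, since $x$ does not see $y_1,y_2$ (there is no transitivity).
\item $t=y$ fails, since $y$ does not see $z$.
\item $t=z$ fails, since $z$ sees neither $y_i$.
\end{itemize}
The same valuation as in (i) refutes $\afterformula$ at $x$.

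For (iii), $\mathbf{K4}+ad_2$ is complete with respect to transitive 2-dense frames, so it suffices to verify~\eqref{eq:afterformula:firstorder} on such frames. Given the hypotheses $x\genericrelation y$ and $x\genericrelation z$, 2-density gives $t$ with $x\genericrelation t\genericrelation y$ and $t\genericrelation z$. Transitivity from $t\genericrelation y\genericrelation y_1$ then gives $t\genericrelation y_1$. The unrelatedness hypotheses are not even needed.

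The main obstacle is in (i): the refuting frame must be dense and confluent without accidentally providing a witness $t$. The trick of making everything reflexive except the root $x$, together with a reflexive top $w$ that sees nothing below it, handles this. Alternatively, (iii) could be done syntactically by instantiating $ad_2$ with $A:=\poss(p_1\land\dots)$ and $B:=q$ and then applying $a4$, which avoids appealing to completeness.
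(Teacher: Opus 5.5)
Your proposal is correct and follows the paper's proof: for (i) and (ii) the paper also refutes $\afterformula$ on finite countermodels, and your frames are exactly its Figure~\ref{fig:afnotin:T} and, up to making the $p_i$-points reflexive, its Figure~\ref{fig:afnotin:d4d}. For (iii) the paper argues directly in a $\mathbf{K4}+ad_2$ model, using $a4$ to obtain $\poss p_1$ and then the instance of $ad_2$ with $p_1,q$. This is essentially your syntactic alternative, and it avoids the appeal to Sahlqvist completeness that your main route needs.
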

\begin{proof}
\begin{figure}[t]
    \centering
    \begin{subfigure}{.4\textwidth}
    \centering
    \begin{tikzpicture}[nodes={draw, circle}, ->,yscale=.8]
    \node (s) {} [grow'=up]
        child {node[refl] (0) {}
            child {node (00) [label=left:{$p_1$}] {}}
            child {node (01) [label=left:{$p_2$}] {}
                child {node[refl] (T) {}}
            }
        }
        child {node[refl] (1) [label=right:{$q$}] {}
        };
    \draw (00) -- (T);
    \draw (1) -- (T);
    \end{tikzpicture}
    \caption{A frame that is transitive, serial, dense, and confluent ($\mathbf{D4.2} + ad$).}
    \label{fig:afnotin:d4d}
    \end{subfigure}
    \hfill
    \begin{subfigure}{.4\textwidth}
    \centering
    \begin{tikzpicture}[nodes={draw, circle}, ->,]
    \node[refl] (s) {} [grow'=up]
        child {node[refl] (0) {}
            child {node[refl] (00) [label=left:{$p_1$}] {}}
            child {node[refl] (01) [label=left:{$p_2$}] {}}
        }
        child {node[refl] (1) [label=left:{$q$}] {}};
    \end{tikzpicture}
    \caption{A reflexive frame where the relation is non-transitive ($\mathbf{T}$).}
    \label{fig:afnotin:T}
    \end{subfigure}
    \caption{Counter models to $\afterformula$ in specific normal modal logics.}
    \label{fig:afnotin}
\end{figure}

Claims~\ref{lem:aaf:d4.2d} and \ref{lem:aaf:t} are proven by models that act as counterexamples given in Figures~\ref{fig:afnotin:d4d} and \ref{fig:afnotin:T} respectively.

For~\ref{lem:aaf:k4d2}, let $\mathcal{M} = \genericmodel$ be a $\mathbf{K4} + ad_2$ model and let $x \in \genericworld$.
If we have that
\begin{equation*}
\mentail{\mathcal{M}}{x}{\poss ( \poss (p_1 \land \lnot p_2 \land \necc \lnot p_2) \land \poss (p_2 \land \lnot p_1 \land \necc \lnot p_1)) \land \poss q};
\end{equation*}
then $\mentail{\mathcal{M}}{x}{\poss p_1}$ (by transitivity and properties of $\land$).
Since we have $\mentail{\mathcal{M}}{x}{\poss p_1 \land \poss q}$, then by 2-density we must have that $\mentail{\mathcal{M}}{x}{\poss(\poss p_1 \land \poss q)}$ and therefore $\mentail{\mathcal{M}}{x}{\afterformula}$.
Since $x, \mathcal{M}$ are generic, then $\mathbf{K4} + ad_2 \vdash \afterformula$.
\end{proof}

Note that reflexivity implies 2-density.
Additionally, since $\mathbf{D4.2} + ad \nvdash \afterformula$, then any sub-logics of $\mathbf{D4.2} + ad$ ($\mathbf{D4}$, $\mathbf{K4}$, \dots) do not contain $\afterformula$.
Conversely, any logics that contain $\mathbf{K4} + ad_2$ as a sub-logic, such as $\mathbf{S4}$, contain $\afterformula$.
Therefore, since $\afterformula$ holds in transitive and 2-dense (or reflexive) frames, then it holds in frames that use a spacetime and one of $\chron$, $\chroneq$ or $\caus$.

Let $\afterlogic := \mathbf{D4} + ad + \afterformula$.\footnote{In \cite{Shapirovsky2005}, the logic $\mathbf{L\alpha_0} = \afterlogic\mathbf{.2} = \mathbf{D4.2} + ad + \afterformula$ is used for an analysis on Minkowski spacetime.}
Since $ad$ and $\afterformula$ are Sahlqvist formula, then by the Sahlqvist Theorem~\cite{Blackburn2001} canonicty is achieved for the formulas and $\afterlogic$.
\begin{corollary}
    The logic $\afterlogic$ is canonical.
\end{corollary}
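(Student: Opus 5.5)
This follows directly from the Sahlqvist Theorem~\cite{Blackburn2001}, which says that every normal modal logic axiomatised over $\mK$ by Sahlqvist formulas is canonical. So the plan has two steps. First, check that every extra axiom of $\afterlogic = \mK + a4 + aD + ad + \afterformula$ is a Sahlqvist formula. Second, invoke the theorem to conclude that the canonical frame of $\afterlogic$ validates $\afterlogic$, which is exactly canonicity.

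For the first step, $a4$ ($\poss\poss A \rightarrow \poss A$), $aD$ ($\poss\top$) and $ad$ ($\poss A \rightarrow \poss\poss A$) are standard Sahlqvist implications. Each has an antecedent built from proposition letters, $\top$, $\land$ and $\poss$, and a positive consequent. For $\afterformula$ the consequent $\poss(\poss p_1 \land \poss q) \lor \poss(\poss p_2 \land \poss q)$ is positive. The antecedent is built from the following pieces using only $\land$ and $\poss$:
\begin{itemize}
\item the atoms $p_1$, $p_2$, $q$;
\item the negative formulas $\lnot p_2$ and $\lnot p_1$;
\item the boxed formulas $\necc \lnot p_2$ and $\necc \lnot p_1$.
\end{itemize}
Since $\necc$ applied to a negative formula is again negative, the antecedent is a Sahlqvist antecedent in the standard sense, where negative formulas are allowed as conjuncts. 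Hence $\afterformula$ is a Sahlqvist implication, as the paper already asserts when deriving the first-order correspondent in Equation~\eqref{eq:afterformula:firstorder}.

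The only point that needs care is this membership check for $\afterformula$. One has to confirm that the boxed negative subformulas occur only within the antecedent's $\land/\poss$-structure and never under a $\necc$ scoping over a positive atom. Once that is confirmed, applying the Sahlqvist Theorem to the finite set $\{a4, aD, ad, \afterformula\}$ immediately gives that $\afterlogic$ is canonical. Moreover, $\afterlogic$ is complete with respect to the class of frames defined by the corresponding first-order conditions: transitivity, seriality, density, and \eqref{eq:afterformula:firstorder}.
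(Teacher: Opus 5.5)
Your proposal is correct and takes the same route as the paper: both check that the axioms of $\afterlogic$ are Sahlqvist formulas and then apply the Sahlqvist Theorem. The paper names only $ad$ and $\afterformula$, treating $\mathbf{D4}$ as standard; your explicit check that $\necc\lnot p_1$ and $\necc\lnot p_2$ are negative formulas inside the $\land/\poss$-structure of the antecedent is exactly the verification the paper leaves implicit.
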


\begin{toappendix}
\subsection{Geodesic Normal Coordinates}
\noindent
In this section we define and collect some technical results from causality theory that are needed to prove the after formula holds in general spacetimes. We refer to \cite[\S 5]{landsman2021FoundationsGeneralRelativity} for details.

Given any subset $A\subseteq M$, we write $x\chron_A y$ if there exists a fd timelike curve from $x$ to $y$ that lies entirely within $A$. Clearly $x\chron_A y$ implies~$x\chron y$. Define~$\caus_A$ and~$\after_A$ similarly.

A curve $\gamma$ is called a \emph{geodesic} if it is length extremising (technically: $\nabla_{\dot\gamma}\dot\gamma=0$ with respect to the Levi-Civita connection), which is interpreted as the motion of a non-accelerating, free falling body. For the present work it is important to note that geodesics are locally uniquely determined by their starting position and initial velocity. This ensures the existence of the \emph{exponential map}: for $x\in M$ it is the smooth map
$\exp_x\colon \mathcal V_x\subseteq T_xM\to M$ defined on a suitable open neighbourhood $\mathcal V_x$ of $0$ by $\exp_x(v):=\gamma_v(1)$, where $\gamma_v$ is the unique geodesic with $\gamma_v(0)=x$ and $\dot\gamma_v(0)=v$. The region $\mathcal{V}_x$ can be shrunk to a convex open neighbourhood of $0$ such that $\exp_x$ becomes a diffeomorphism onto its image $\exp_x(\mathcal{U}_x)=U_x$, which is called a \emph{(convex) normal neighbourhood} of $x$. This diffeomorphism facilitates what is known as \emph{geodesic normal coordinates}. Under these coordinates, geodesics within the neighbourhood $U_x$ emanating from $x$ correspond precisely to straight line segments in $T_xM$ through the origin.

\begin{theorem}\label{thm:GNC}
    Every $x\in M$ admits an open convex normal neighbourhood $U_x\subseteq M$ together with a diffeomorphism~${\Phi:=\exp_x^{-1}\colon U_x\to \mathcal{U}_x}$ onto an open convex neighbourhood $\mathcal{U}_x$ of the origin in Minkowski space,
    such that $\Phi(x)=0$, affinely parametrised geodesics $\gamma$ in $U_x$ with $\gamma(0)=x$ are transformed into straight lines as~$\Phi\circ \gamma (t)= t\dot{\gamma}(0)$, and for every $y\in U_x$ we have:
    \begin{equation}
    \label{eq:local relations}\tag{$\star$}
        x \chron_{U_x} y \iff 0 \chron \Phi(y),
        \qquad
        x \caus_{U_x} y \iff 0 \caus \Phi(y),
        \qquad
        x \after_{U_x} y \iff 0 \after \Phi(y).
    \end{equation}
\end{theorem}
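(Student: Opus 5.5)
The plan is to build the coordinates from the exponential map and reduce the causal statements to a standard fact about normal neighbourhoods: inside a convex normal neighbourhood, the causal character of curves starting at $x$ is governed by the radial geodesic. I would proceed in three steps.

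\emph{Step 1 (setup).} Fix $x\in M$ and choose an orthonormal basis $e_0,\dots,e_n$ of $T_xM$ with respect to $g_x$, taking $e_0$ future-directed, i.e.\ in the same cone as the time orientation $T_x$. This basis gives a linear isometry $(T_xM,g_x)\cong(\mathbb{R}^{1+n},\eta)$ that respects time orientation. By standard ODE theory, $\exp_x$ is defined and smooth near $0$, and $d(\exp_x)_0=\mathrm{id}$. Hence it is a local diffeomorphism, and by the usual Whitehead argument we may shrink to a convex normal neighbourhood $U_x=\exp_x(\mathcal{U}_x)$ with $\mathcal{U}_x$ star-shaped, indeed convex. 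Set $\Phi=\exp_x^{-1}$. Then $\Phi(x)=0$. For a geodesic with $\gamma(0)=x$, uniqueness gives $\gamma(t)=\exp_x(t\dot\gamma(0))$, so $\Phi\circ\gamma(t)=t\dot\gamma(0)$.

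\emph{Step 2 (the easy directions of $(\star)$).} Suppose $0\chron\Phi(y)$, so $v=\Phi(y)$ is future timelike. Then the radial geodesic $t\mapsto\exp_x(tv)$, $t\in[0,1]$, lies in $U_x$ because $\mathcal{U}_x$ is star-shaped. Its tangent is parallel along it, so $g(\dot\gamma,\dot\gamma)=\eta(v,v)<0$ stays constant and the time orientation is preserved by continuity. Hence $x\chron_{U_x}y$. The causal case is identical with $\eta(v,v)\le0$ and $v\ne0$, which gives $\after_{U_x}$. Adding the trivial case $y=x$ gives $\caus_{U_x}$.

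\emph{Step 3 (converse, the main obstacle).} We must show that if a future-directed timelike (resp.\ causal) curve $\sigma$ in $U_x$ runs from $x$ to $y$, then $\Phi(y)$ is future timelike (resp.\ causal). I would use the Gauss lemma. Let $P$ be the position vector field on $U_x$, so that $P(z)=d\exp_x(\Phi(z))$ is the radial tangent, and set $q(z)=\eta(\Phi(z),\Phi(z))$. The Gauss lemma gives $\mathrm{grad}\,q=2P$.

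Along $\sigma$ we then have $\tfrac{d}{dt}q(\sigma(t))=2g(P,\dot\sigma)$. Initially $\Phi(\sigma(t))\approx t\dot\sigma(0)$, so $\sigma$ enters the timelike cone $\{q<0\}$ (the future component) when $\sigma$ is timelike. While $\sigma$ stays in that region, $P$ is future timelike, so $g(P,\dot\sigma)<0$ and $q$ strictly decreases. Therefore $\sigma$ can never leave the region, and $\Phi(y)$ is future timelike.

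For causal curves the same argument gives $g(P,\dot\sigma)\le0$, hence $q$ is nonincreasing, and a limiting argument yields $q(y)\le0$ with $\Phi(y)$ in the closed future cone. To make this limiting argument rigorous, I would approximate $\sigma$ by timelike curves in a slightly wider metric, or else handle the boundary $\{q=0\}$ directly via the sign of $g(P,\dot\sigma)$.

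If $y\ne x$, or if $y=x$ is reached by a nondegenerate causal curve, then for $\after$ we must exclude $\Phi(y)=0$. This follows because $q$ together with the time function $\eta(e_0,\Phi(\cdot))$ is strictly monotone along causal curves near $x$, so no causal loop exists inside $U_x$. This gives $0\after\Phi(y)$.

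I expect the Gauss lemma and this boundary case to be the technical heart of the proof. Since this is textbook material, I would cite \cite[\S 5]{landsman2021FoundationsGeneralRelativity} for the detailed verification.
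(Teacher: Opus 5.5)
Your proposal follows essentially the same route as the paper. The paper gives no argument of its own and simply cites \cite[Theorem~5.5]{landsman2021FoundationsGeneralRelativity} and \cite[Corollary~2.10]{minguzzi2019LorentzianCausalityTheory}, which establish exactly this exponential-map/Gauss-lemma picture; your timelike case is sound, and like the paper you defer the causal case to the literature. One caution on that causal case: the inequality $g(P,\dot\sigma)\le 0$ is only available once $\sigma$ is known to lie in the closed future cone, since just outside the cone $P$ is spacelike and on the null cone $g(P,\dot\sigma)$ vanishes whenever $\dot\sigma$ is parallel to $P$. So a sign check on $\{q=0\}$ alone does not close the argument; approximating $\sigma$ by timelike curves and passing to the limit does.
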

\begin{proof}
    See~\cite[Theorem~5.5]{landsman2021FoundationsGeneralRelativity} and \cite[Corollary~2.10]{minguzzi2019LorentzianCausalityTheory}.
\end{proof}

\begin{remark}
    Of course, the dual of Equation~\eqref{eq:local relations} also holds, where $y$ appears in the past of $x$.
    Note however that $\Phi$ does not preserve the causal relations between arbitrary points in $U_x$: only to and from the basepoint~$x$.
\end{remark}
\end{toappendix}

\subsection{After Formula in Spacetime}
\label{sec:spacetime:after}
As mentioned previously, it was shown in~\cite{Shapirovsky2005} that the after formula is satisfied in the after modality of any Minkowski space.

\begin{proposition}[\cite{Shapirovsky2005}]\label{thm:minkowskiafter}
    Consider $1+n$-dimensional Minkowski space $\mathbb{R}^{1+n}$ with its after relation~$\after$. Then
    \begin{equation*}
        \entail{\mframe{\mathbb{R}^{1+n}}{\after}}{}{a\alpha f}.
    \end{equation*}
\end{proposition}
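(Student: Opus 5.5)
By the Sahlqvist correspondence noted above, it suffices to verify the first-order condition~\eqref{eq:afterformula:firstorder} on the frame $\mframe{\mathbb{R}^{1+n}}{\after}$. Minkowski space is causal, so $\after$ is irreflexive. Hence $x\after y$ iff $x\neq y$ and $x\caus y$, and by Lemma~\ref{lemma:relations between causal orders} iff $x\chron y$ or ($x\neq y$ and $x\horismos y$), with the explicit description~\eqref{eq:minkowski causal relations}.

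So fix $x\after y$, $y\after y_1$, $y\after y_2$, $x\after z$, with $y_1,y_2$ distinct and causally unrelated. The plan has two steps.

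\emph{Step 1: one of $y_1,y_2$ lies in the open future of $x$.} Suppose not. Since $x\caus y\caus y_i$, we then have $x\horismos y_i$ for both $i$. Then $y_1,y_2$ lie on the boundary of the light cone of $x$ while passing through $y$. By the push-up rule, $x\horismos y$ and $y\horismos y_i$ must both hold; otherwise we would get $x\chron y_i$.

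In Minkowski space, a broken null path $x\to y\to y_i$ with $x\horismos y_i$ must be a single straight null segment. This is the equality case of the reverse triangle inequality, which is where the explicit formula $y^0=x^0+\norm{\vec y-\vec x}$ gives equality in $\norm{\vec y_i-\vec x}\le\norm{\vec y_i-\vec y}+\norm{\vec y-\vec x}$. Equality forces $\vec y-\vec x$ and $\vec y_i-\vec y$ to be non-negatively parallel.

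Consequently $y_1$ and $y_2$ both lie on the null ray from $x$ through $y$, beyond $y$. Two distinct points on one future null ray are $\after$-related, which contradicts the hypothesis. Hence, without loss of generality, $x\chron y_1$. I expect this step to be the main (if modest) obstacle, since it is the only place the hypothesis on $y_1,y_2$ and the rigid geometry of Minkowski space are used.

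\emph{Step 2: construct $t$.} Take the straight segment from $x$ to $z$. It is a causal curve, either timelike or null, and we set $t:=x+s(z-x)$ for small $s\in(0,1)$.

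Then $t\neq x$ and $x\caus t$, so $x\after t$. Likewise $t\neq z$ and $t\caus z$ along the same segment, so $t\after z$.

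Finally, $I^-(y_1)=\{w: w\chron y_1\}$ is open, because the inequality in~\eqref{eq:minkowski causal relations} is strict and continuous in $w$, and it contains $x$ by Step 1. Hence for $s$ small enough, $t\in I^-(y_1)$, so $t\chron y_1$ and thus $t\after y_1$. This yields the required witness $t$, so~\eqref{eq:afterformula:firstorder} holds and therefore $\entail{\mframe{\mathbb{R}^{1+n}}{\after}}{}{a\alpha f}$.
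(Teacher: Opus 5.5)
Your proof is correct and is essentially the paper's own argument specialised to Minkowski space. The paper attributes this proposition to Shapirovsky and Shehtman, but its Lemma~\ref{lem:mink:horismos} and Corollary~\ref{corollary:unique lightline in Minkowski} are exactly your Step~1, and your Step~2 is the final part of the proof of Theorem~\ref{thm:spacetimeafter}, where taking $t$ on the straight segment from $x$ to $z$ and using openness of $I^-(y_1)$ harmlessly merges the paper's two cases ($x\chron z$ via 2-density of $\chron$, and $x\horismos z$ via an intermediate point $p$ with $x\chron p\chron y_1$) into one.
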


Now we generalise this result, showing that $\afterformula$ holds in \emph{any} spacetime.
The main technical step towards proving this is to show the following property: if $x\after y \after y_1,y_2$ and $y_1\neq y_2$ are spacelike separated, then $x\chron y_i$ for some~$i$. First we give an elementary proof in Minkowski space (Lemma~\ref{lem:mink:horismos} and Corollary~\ref{corollary:unique lightline in Minkowski}), which is then lifted to arbitrary spacetimes using geodesic normal coordinates~(Theorem~\ref{thm:GNC}). \change{The key intuition is that in the extreme scenario that~$x\after y\after y_j$ represents a lightline, the fact that~$y_i$ is spacelike separated from~$y_j$ means that it must necessarily move away from this lightline, and so must be in the interior cone of~$x$. This behaviour lifts from Minkowski space to general spacetimes.}

\begin{toappendix}
\subsection{Proof of Theorem~\ref{thm:spacetimeafter}}
As mentioned in~Section~\ref{sec:spacetime:after}, the main technical step is to prove: if $x\after y \after y_1,y_2$ and $y_1\neq y_2$ are spacelike separated, then $x\chron y_i$ for some~$i$. We first do this in Minkowski space.

\begin{lemma}
    In Minkowski space, let $x\horismos y \horismos y_1,y_2$ and $x\horismos y_1,y_2$. Then $y_1\horismos y_2$ or $y_2\horismos y_1$.
    \label{lem:mink:horismos}
\end{lemma}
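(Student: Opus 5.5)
The plan is to turn everything into a statement about null vectors in $\mathbb{R}^{1+n}$ and use the reverse Cauchy--Schwarz behaviour of the Minkowski form. By~\eqref{eq:minkowski causal relations}, $a\horismos b$ holds iff $b-a$ is either $0$ or a future-directed null vector, i.e.\ $b^0-a^0=\norm{\vec b-\vec a}$. Set $u:=y-x$, $v_i:=y_i-y$ and $w_i:=y_i-x=u+v_i$ for $i=1,2$. The hypotheses say that $u$, $v_i$ and $w_i$ are all future null or zero.

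\textbf{Caveat on the degenerate case.} As literally stated, the lemma fails when $x=y$: any two future null vectors $v_1,v_2$ in different directions give points $y_1=x+v_1$ and $y_2=x+v_2$ that are spacelike separated. So I will prove the lemma under the additional assumption $x\neq y$, i.e.\ $u\neq 0$. This is exactly the situation in which the lemma is to be applied, namely $x\after y$ with $x\neq y$ giving $x\horismos y$ via Lemma~\ref{lem:relcommonproperties}. If $v_i=0$ for some $i$, then $y_i=y$ and $y\horismos y_j$ already gives the conclusion, so the interesting case is $v_i\neq0$.

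\textbf{Key algebraic fact.} If $u,v$ are future null (or zero) and $\eta(u,v)=0$, then $v=\lambda u$ for some $\lambda\ge 0$ (when $u\ne0$). To see this, write $u=(\norm{\vec u},\vec u)$ and $v=(\norm{\vec v},\vec v)$. Then
\[
\eta(u,v)=-\norm{\vec u}\norm{\vec v}+\vec u\cdot\vec v\le 0
\]
by Cauchy--Schwarz. Equality holds iff $\vec v$ is a nonnegative multiple of $\vec u$, and the time components then follow automatically. This is the step that carries the content; everything else is bookkeeping.

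\textbf{Main argument.} Since $u$, $v_i$ and $u+v_i$ are null,
\[
0=\eta(u+v_i,u+v_i)=\eta(u,u)+2\eta(u,v_i)+\eta(v_i,v_i)=2\eta(u,v_i),
\]
so $\eta(u,v_i)=0$. By the key fact, $v_i=\lambda_i u$ with $\lambda_i\ge0$, for both $i$. Hence $y_1$ and $y_2$ both lie on the future null ray $\{y+\lambda u:\lambda\ge0\}$. Without loss of generality $\lambda_1\le\lambda_2$. Then $y_2-y_1=(\lambda_2-\lambda_1)u$ is future null or zero, so $y_1\horismos y_2$; in the symmetric case $y_2\horismos y_1$.

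I expect no real obstacle beyond stating the Cauchy--Schwarz equality case carefully, including its time components, and being explicit about the degenerate case $x=y$ excluded above.
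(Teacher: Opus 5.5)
Your proof is correct and is essentially the paper's argument. Both show that each $y_i$ lies on the null ray $\{y+\lambda(y-x):\lambda\ge 0\}$: the paper uses the equality case of the spatial triangle inequality $\norm{\vec z-\vec x}\le\norm{\vec z-\vec y}+\norm{\vec y-\vec x}$, and you use $\eta(u,v_i)=0$ with the equality case of Cauchy--Schwarz, which amount to the same thing.

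Your caveat is also right. For $x=y$ the statement is false (e.g.\ $x=y=0$, $y_1=(1,1)$, $y_2=(1,-1)$ in $\mathbb{R}^2$), and the paper's proof tacitly uses $\vec y\neq\vec x$ in its equality step. This is harmless, since in Corollary~\ref{corollary:unique lightline in Minkowski} the hypothesis $x\after y$ forces $x\neq y$ in Minkowski space.
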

\begin{proof}
    Take some arbitrary $z$ in the intersection of the lightcones of $x$ and $y$. Then we get from Equation~\eqref{eq:minkowski causal relations} that $\norm{\vec{z}-\vec{x}}= z^0- x^0$ and $\norm{\vec{z}-\vec{y}}=z^0-y^0$, and hence
    $$
    \norm{\vec{z}-\vec{x}}-\norm{\vec{z}-\vec{y}} = y^0 -x^0 = \norm{\vec{y}-\vec{x}}.
    $$
    This shows that equality is attained in the triangle inequality:
    $$
    \norm{\vec{z}-\vec{x}}
    \leq 
    \norm{\vec{z}-\vec{y}} + \norm{\vec{y}-\vec{x}},
    $$
    which occurs precisely when there exists a scalar $\lambda\geq 0$ with $\vec{z}-\vec{y}=\lambda(\vec{y}-\vec{x})$.
    In particular, for our situation, we get $\lambda_1,\lambda_2\geq 0$ with $\vec{y_i} = \vec{y}+ \lambda_i(\vec{y}-\vec{x})$. Note now by $y\horismos y_i$ that
    $$
    y_i^0 = y^0 + \norm{\vec{y_i}-\vec{y}}
    =
    y^0 + \lambda_i\norm{\vec{y}-\vec{x}} = y^0 + \lambda_i(y^0-x^0),
    $$
    so in particular $y_2^0 - y_1^0 = (\lambda_2-\lambda_1)(y^0-x^0)$. Similarly note $\vec{y_2}-\vec{y_1} = (\lambda_2-\lambda_1)(\vec{y}-\vec{x})$, and so we get
    $$
    \norm{\vec{y_2}-\vec{y_1}} = |\lambda_2-\lambda_1|\norm{\vec{y}-\vec{x}}
    =
    |\lambda_2-\lambda_1|(y^0-x^0)
    = 
    |y_2^0 - y_1^0|,
    $$
    which is precisely what it means for $y_1\horismos y_2$ or $y_2\horismos y_1$ to hold.
\end{proof}

\begin{corollary}\label{corollary:unique lightline in Minkowski}
    In Minkowski space, if $x\after y\after y_1,y_2$ and $y_1\neq y_2$ are spacelike separated, then~${x\chron y_i}$ for some $i$.
\end{corollary}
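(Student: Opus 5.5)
We argue by contradiction, using Lemma~\ref{lem:mink:horismos}. Suppose $x\after y\after y_1,y_2$ with $y_1\neq y_2$ spacelike separated, and suppose that neither $x\chron y_1$ nor $x\chron y_2$ holds.

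The first step is to reduce to the horismos relation everywhere. By transitivity of $\after$ (Lemma~\ref{lem:relcommonproperties}) we have $x\after y_i$, and hence $x\caus y_i$. Since not $x\chron y_i$, we get $x\horismos y_i$; here Minkowski space is causal, so $x\neq y_i$. Next, if $x\chron y$, then the push-up rule together with $y\caus y_i$ would give $x\chron y_i$, which is a contradiction. So $x\horismos y$. Likewise, if $y\chron y_i$, then $x\caus y\chron y_i$ gives $x\chron y_i$ by push-up, again a contradiction. So $y\horismos y_i$ for both $i$.

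We are now exactly in the hypotheses of Lemma~\ref{lem:mink:horismos}: $x\horismos y\horismos y_1,y_2$ and $x\horismos y_1,y_2$. The lemma yields $y_1\horismos y_2$ or $y_2\horismos y_1$. Since $y_1\neq y_2$, this means $y_1\caus y_2$ or $y_2\caus y_1$ (indeed $y_1\after y_2$ or the reverse). This contradicts the assumption that $y_1$ and $y_2$ are spacelike separated. Hence $x\chron y_i$ for some $i$.

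The only delicate point is the reduction step. We must make sure that $x\after y$ really places $y$ on the light cone of $x$, including the degenerate case $x=y$. Minkowski space is causal, so $\after$ is irreflexive and $x\neq y$; in any case $x=y$ would still satisfy $x\horismos y$ by reflexivity, and the proof of Lemma~\ref{lem:mink:horismos} tolerates this (then $\vec y-\vec x=0$ and the conclusion $|y_2^0-y_1^0|=\norm{\vec y_2-\vec y_1}$ still follows directly from $x\horismos y_i$). Everything else is an immediate application of the push-up rule and the lemma.
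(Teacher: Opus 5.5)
Your argument is correct and is essentially the paper's proof. You use transitivity and two applications of the push-up rule to reduce to $x\horismos y\horismos y_1,y_2$ with $x\horismos y_1,y_2$, then apply Lemma~\ref{lem:mink:horismos} to contradict spacelike separation.

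One caveat concerns your parenthetical claim that Lemma~\ref{lem:mink:horismos} tolerates $x=y$: it is false. Take $x=y=(0,0)$, $y_1=(1,1)$, $y_2=(1,-1)$ in $\mathbb{R}^{1+1}$, where $|y_2^0-y_1^0|=0\neq 2=\norm{\vec{y_2}-\vec{y_1}}$. So excluding $x=y$ via irreflexivity of $\after$ in Minkowski space is genuinely needed, not optional. Since you do supply that exclusion, the proof stands.
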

\begin{proof}
    Transitivity implies $x\after y_1,y_2$. So either $x\chron y_i$ for some $i$, in which case we are done, or $x\horismos y_1,y_2$. Moreover, if $x\chron y$ or $y\chron y_i$ for some $i$ we get the desired relation via the push-up rule. We are therefore left to consider the case that $x\horismos y\horismos y_1,y_2$. However, by Lemma~\ref{lem:mink:horismos}, this would imply $y_1\horismos y_2$ or $y_2\horismos y_1$, contradicting that $y_1$ and $y_2$ are spacelike separated.
\end{proof}

The next few lemmas generalise this result to arbitrary spacetimes. We need some more definitions. A subset $A\subseteq M$ is called \emph{achronal} if $I^{+}(A)\cap A=\varnothing$, and a curve $\gamma$ is called \emph{achronal} if $\im(\gamma)$ is achronal, \change{where we recall for~$\gamma\colon I\to M$ that~$\im(\gamma)=\{\gamma(t):t\in I\}$.} Explicitly, this means there are no times $s,t$ so that~$\gamma(s)\chron \gamma(t)$. A geodesic from~$p$ to~$q$ is called \emph{maximising} if it has greater or equal length than every other causal curve from~$p$ to~$q$. We then have the powerful result~\cite[Theorem~2.22]{minguzzi2019LorentzianCausalityTheory}, saying achronal lightlike geodesics are precisely the maximising lightlike ones.

\begin{theorem}
\label{thm:2.22}
    Let $\gamma$ be a causal curve connecting $p$ to $q\neq p$. Either: there is a timelike curve~$\sigma$ from $p$ to $q$ whose length is strictly greater than that of $\gamma$; or $\gamma$ is a maximising geodesic (up to parametrisation).
    In particular, if there are no timelike curves connecting $p$ to $q$ then $\gamma$ is an achronal lightlike geodesic (up to parametrisation).    
\end{theorem}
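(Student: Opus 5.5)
The plan is to reduce everything to the local picture given by geodesic normal coordinates (Theorem~\ref{thm:GNC}) and then assemble the local information along $\gamma$ by a compactness argument. Two local facts will be established first. Let $U$ be a convex normal neighbourhood and $a,b\in U$.
\begin{enumerate}[label=(\alph*)]
    \item Among all causal curves from $a$ to $b$ lying in $U$, the radial geodesic is the unique longest one, up to parametrisation. Any causal curve in $U$ that is not a reparametrisation of it has strictly smaller length.
    \item If a causal curve from $a$ to $b$ in $U$ is not a reparametrised null geodesic, then $a\chron_U b$.
\end{enumerate}
Both follow by transporting to Minkowski space via $\Phi=\exp_a^{-1}$ and using~\eqref{eq:local relations}. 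For (a) I would run the Gauss lemma argument: the proper-time function $\tau(y)=\sqrt{-\eta(\Phi(y),\Phi(y))}$ has gradient equal to the radial unit vector, and the reverse Cauchy--Schwarz inequality gives $L(\sigma)\le \tau(b)$, with equality exactly for radial curves. For (b), the Minkowski computation of Lemma~\ref{lem:mink:horismos} shows that $0\horismos \Phi(b)$ forces the curve to lie on the generating null ray.

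Next I cover the compact image of $\gamma$ by finitely many convex normal neighbourhoods and choose a partition $p=\gamma(t_0),\dots,\gamma(t_k)=q$ with each segment $\gamma|_{[t_{i-1},t_i]}$ contained in one neighbourhood $U_i$. Two cases then arise.

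In the first case, some segment is not a reparametrised geodesic, or two consecutive segments meet at a genuine corner. By (b), applied if necessary to a segment straddling the corner inside a single $U_i$, we get $\gamma(t_{i-1})\chron\gamma(t_i)$ for some $i$. We then replace that piece by the radial geodesic, which by (a) is strictly longer. To turn the whole curve into a timelike one, I would use the standard deformation lemma: once one piece is timelike, the chronological relation propagates (push-up rule), and the adjacent null pieces can be pushed slightly into the timelike cone, each with an arbitrarily small loss of length. Choosing these losses smaller than the strict gain from (a) yields a timelike $\sigma$ from $p$ to $q$ with $L(\sigma)>L(\gamma)$.

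In the second case, $\gamma$ is an unbroken geodesic. If it is not maximising, there is a causal curve $\sigma'$ with $L(\sigma')>L(\gamma)\ge 0$. Then $\sigma'$ has positive length and hence is not a null geodesic, so the same push-and-perturb argument converts $\sigma'$ into a timelike curve whose length is still greater than $L(\gamma)$. This establishes the dichotomy.

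For the \emph{in particular} clause, suppose there is no timelike curve from $p$ to $q$. Then the dichotomy forces $\gamma$ to be a maximising geodesic. It cannot be timelike, since it would then itself be a timelike curve from $p$ to $q$. Being a geodesic, its causal character is constant, so it is lightlike. For achronality, if $\gamma(s)\chron\gamma(t)$ for some $s<t$, then $p\caus\gamma(s)\chron\gamma(t)\caus q$, and the push-up rule gives $p\chron q$, a contradiction.

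The main obstacle is the deformation step: making "push the null pieces into the timelike cone while losing arbitrarily little length" rigorous for piecewise smooth curves that cross several normal neighbourhoods. I would first try to handle it neighbourhood by neighbourhood in normal coordinates, where it reduces to perturbing straight segments in Minkowski space. The remaining work is to glue these local perturbations at the partition points while preserving the time orientation.
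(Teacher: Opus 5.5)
The paper gives no proof of this statement: it imports it from Minguzzi's review (Theorem~2.22 there), so your proposal has to stand on its own as a reconstruction. Its architecture is the standard one: local maximality in convex normal neighbourhoods, a finite cover of $\im(\gamma)$, and replacement of pieces by radial geodesics. Your case split and your treatment of the \emph{in particular} clause are sound. For achronality you may take any pair $s,t$, not only $s<t$, since $p\caus\gamma(s)$ and $\gamma(t)\caus q$ hold for all parameters.

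However, your justification of (b) does not work. Lemma~\ref{lem:mink:horismos} is about points of flat Minkowski space related by $\eta$, but $\Phi=\exp_a^{-1}$ does not send $g$-causal curves to $\eta$-causal curves. As the remark after Theorem~\ref{thm:GNC} stresses, it only preserves relations from the base point. For an intermediate point you know $0\after\Phi(\sigma(t))$, but nothing relating $\Phi(\sigma(t))$ to $\Phi(b)$, so the triangle-inequality computation cannot be run. Replacing it by a local push-up inside $U$ would be circular, since that is (b) itself.

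The repair uses the Gauss lemma you already need for (a). Let $P$ be the position vector field and $Q=\eta(\Phi,\Phi)=g(P,P)$; then $\nabla Q=2P$. Along a future-directed causal $\sigma$ from $a$ this gives $(Q\circ\sigma)'=2g(P,\dot\sigma)\le 0$, because $P$ is future causal along $\sigma$ by \eqref{eq:local relations}. Equality forces $\dot\sigma$ to be null and parallel to $P$. If $a\chron_U b$ fails, then $Q(b)=0$ by \eqref{eq:local relations}, hence $Q\circ\sigma\equiv 0$. So $\sigma$ runs along a single null radial geodesic; it never returns to $a$, since $0\after 0$ is impossible.

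Second, the deformation step you single out as the main obstacle is already closed by (a) and (b). It needs no loss of length and no perturbation or gluing. Suppose the piece over $[t_{i-1},t_i]$ has been replaced by a timelike radial geodesic, and pick $r$ on it inside $U_{i+1}$, close to $\gamma(t_i)$. The causal curve from $r$ to $\gamma(t_{i+1})$ lies in $U_{i+1}$ and contains a timelike portion, so it is not a null pregeodesic. By (b), $r\chron_{U_{i+1}}\gamma(t_{i+1})$, and by (a), replacing this curve by the (timelike) radial geodesic does not decrease length. Inducting forwards and backwards over the finitely many pieces gives a broken timelike geodesic from $p$ to $q$. Its length is at least that of the first improved curve, hence strictly greater than $L(\gamma)$.

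The same induction handles your second case. $L(\sigma')>0$ means $\sigma'$ is not a null pregeodesic, so some piece or corner of $\sigma'$ supplies a local chronological pair to start from. With these two repairs your argument is a complete proof of the cited result.
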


\begin{lemma}\label{lem:corner rounding}
Let $U\subseteq M$ be open, and take $p,q\in U$. If there exists a causal curve~$\gamma$ in $U$ from~$p$ to~$q\neq p$ that is not an achronal lightlike geodesic, then $p\chron_U q$.
\end{lemma}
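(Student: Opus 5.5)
The plan is to apply Theorem~\ref{thm:2.22} to the spacetime $U$ itself. An open subset $U\subseteq M$, equipped with the restricted metric and time orientation, is again a spacetime: it is a smooth manifold, second countable and Hausdorff as a subspace. Its chronology relation is exactly $\chron_U$, since the timelike curves of $U$ are the timelike curves of $M$ lying in $U$. Likewise, the notions of geodesic, length and achronality inside $U$ are intrinsic to the open submanifold. The one caveat is that ``achronal'' must now be read relative to $U$, meaning with respect to $I^{+}_U$ rather than $I^{+}$ in $M$. I will interpret the hypothesis ``$\gamma$ is not an achronal lightlike geodesic'' in this $U$-relative sense. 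This is the natural reading, because the lemma is intended to be used inside the convex normal neighbourhoods of Theorem~\ref{thm:GNC}, where the relations are taken with respect to $U$.

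With this reading the argument is a contrapositive. Suppose $p\chron_U q$ fails. Then in the spacetime $U$ there is no timelike curve from $p$ to $q$, while $\gamma$ is a causal curve in $U$ from $p$ to $q\neq p$. The ``in particular'' clause of Theorem~\ref{thm:2.22}, applied in $U$, says that $\gamma$ is an achronal lightlike geodesic of $U$, up to reparametrisation. This contradicts the hypothesis, so $p\chron_U q$. Equivalently, one can use the first dichotomy of Theorem~\ref{thm:2.22} directly. Either it produces a timelike curve $\sigma$ in $U$ from $p$ to $q$, which gives $p\chron_U q$ at once, or $\gamma$ is a maximising geodesic in $U$. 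In the second case, either $\gamma$ has positive length, so it is a timelike geodesic and $p\chron_U q$ via $\gamma$ itself, or it is a lightlike maximising geodesic, hence achronal in $U$, which is again excluded.

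The main obstacle is not the logic but the justification that Theorem~\ref{thm:2.22} really applies to $U$. That theorem is stated for spacetimes, and some versions of it are proved for arbitrary (not necessarily causally nice) spacetimes using local corner-rounding arguments, which are purely local. I would therefore remark explicitly that $U$ satisfies all the standing assumptions.

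A second point needs care. If the intended meaning of ``achronal'' is with respect to $M$, then I need that achronality in $M$ implies achronality in $U$. This holds trivially, because $\chron_U\subseteq\chron$, so any curve achronal in $M$ is achronal in $U$. The hypothesis ``not achronal in $M$'' is weaker than ``not achronal in $U$'', however, so the contrapositive above requires the $U$-relative reading. If the $M$-relative reading were forced, I would fall back on the local corner-rounding argument. A causal curve that is not a lightlike geodesic must contain either a timelike segment or a break point. Deforming within a small convex neighbourhood contained in $U$ yields a timelike curve, and the push-up rule restricted to $U$ then propagates this to $p\chron_U q$.
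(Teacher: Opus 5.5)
Your main argument is correct and is the paper's proof: you treat $(U,g|_U)$ as a spacetime, apply Theorem~\ref{thm:2.22} there, and note that a maximising geodesic is either timelike (giving $p\chron_U q$) or lightlike and hence achronal in $U$. The paper also reads ``achronal'' relative to $U$ without saying so, which is harmless because the lemma is only invoked in Lemma~\ref{lem:lift chronology}, for a curve that is not a lightlike geodesic at all.

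You should drop your fallback for the $M$-relative reading, because under that reading the lemma is false. On the flat cylinder obtained from two-dimensional Minkowski space by identifying $x\sim x+L$, take the null segment from $p=(0,0)$ to $q=(a,a)$ with $L/2<a<L$. It is not achronal in $M$, since $p\chron q$ by going round the other way, yet $p\chron_U q$ fails for a thin strip $U$ around it. Separately, your dichotomy ``timelike segment or break point'' misses smooth null curves that are not geodesics, such as the helix $t\mapsto(t,\cos t,\sin t)$ in $1+2$ dimensions.
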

\begin{proof}
Consider the spacetime $(U,g|_U)$. Applying Theorem~\ref{thm:2.22} to the given curve $\gamma$ in $U$ from~$p$ to~$q\neq p$, either we get a timelike curve $\sigma$ facilitating $p\chron_U q$, in which case we are done, or $\gamma$ is a maximising geodesic. If $\gamma$ is itself timelike we are also done, but if $\gamma$ is a maximising lightlike geodesic then it is achronal, contradicting the hypothesis.
\end{proof}

\begin{lemma}\label{lem:lift chronology}
Let $x\in M$, and take geodesic normal coordinates $\Phi\colon U_x\to \mathcal{U}_x$ from Theorem~\ref{thm:GNC}.
Suppose $p,q\in U_x$ satisfy~${p\after_{U_x} x \after_{U_x} q}$. If $\Phi(p)\chron \Phi(q)$ in Minkowski space, then $p\chron_{U_x} q$.
\end{lemma}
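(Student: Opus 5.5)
The plan is to concatenate the two causal curves through $x$ and then apply Lemma~\ref{lem:corner rounding}. The only case that lemma does not cover is when the concatenation is an achronal lightlike geodesic. In that case geodesic normal coordinates turn it into a null straight line through the origin, which contradicts $\Phi(p)\chron\Phi(q)$.

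First, $p\neq q$. Indeed, if $p=q$ then $\Phi(p)\chron\Phi(p)$ in Minkowski space, which is impossible since $\chron$ is irreflexive there: it would require $y^0>y^0$ in Equation~\eqref{eq:minkowski causal relations}. Next, by hypothesis there are causal curves $\gamma_1$ from $p$ to $x$ and $\gamma_2$ from $x$ to $q$, both lying in $U_x$. Their concatenation $\gamma$ is a (piecewise smooth) causal curve in $U_x$ from $p$ to $q\neq p$, and $U_x$ is open. If $\gamma$ is not an achronal lightlike geodesic, Lemma~\ref{lem:corner rounding} gives $p\chron_{U_x} q$ directly, and we are done.

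It remains to rule out the case where $\gamma$ is, up to parametrisation, an achronal lightlike geodesic. Then $\gamma$ is a single lightlike geodesic lying in $U_x$ and passing through $x$. Reparametrise it affinely so that $\gamma(0)=x$, and set $v:=\dot\gamma(0)$, which is a nonzero null vector. By Theorem~\ref{thm:GNC}, $\Phi\circ\gamma(t)=tv$. Since $p$ lies before $x$ and $q$ after $x$ along $\gamma$, we get $\Phi(p)=-sv$ and $\Phi(q)=tv$ for some $s,t\geq 0$. Hence $\Phi(q)-\Phi(p)=(s+t)v$ is a null vector. By Equation~\eqref{eq:minkowski causal relations} this means $\Phi(p)\horismos\Phi(q)$ and not $\Phi(p)\chron\Phi(q)$, contradicting the hypothesis.

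The step I expect to need the most care is justifying that the whole geodesic, including the backward part through $p$, is described by $t\mapsto tv$ in normal coordinates. Theorem~\ref{thm:GNC} is stated for affinely parametrised geodesics $\gamma$ in $U_x$ with $\gamma(0)=x$, so I would apply it to the geodesic extended over both positive and negative parameters. If needed, I would invoke local uniqueness of geodesics with given initial data at $x$ to identify $\gamma$ with $t\mapsto \exp_x(tv)$ on its whole domain, using that $\gamma$ stays inside the convex set $U_x$. The same uniqueness argument handles the orientation: because $\gamma$ is future directed and passes from $p$ through $x$ to $q$, the parameters of $p$ and $q$ have opposite signs, which gives the forms $-sv$ and $tv$ above.
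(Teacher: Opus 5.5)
Your proof is correct and follows essentially the same route as the paper: you concatenate the two causal curves through $x$ and invoke the corner-rounding lemma. You rule out the achronal lightlike geodesic case because in normal coordinates it becomes a null line through the origin, which forces $\Phi(p)\horismos\Phi(q)$. The only differences are minor. You drop the paper's preliminary push-up case split, which is redundant since the corner-rounding lemma already covers curves that are not lightlike geodesics. You also explicitly check $p\neq q$, a hypothesis of that lemma that the paper leaves implicit.
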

\begin{proof}
If $p\chron_{U_x} x$ or $x\chron_{U_x} q$ then the desired result follows from the push-up rule. Suppose therefore that $p\horismos_{U_x} x$ and $x\horismos_{U_x} q$. Let $\alpha$ and $\beta$ be the curves facilitating these relations, and let $\gamma$ be their concatenation. We claim that $\gamma$ is not a lightlike geodesic, so in particular not an achronal lightlike geodesic. From that, the desired~$p\chron_{U_x}q$ will follow by~Lemma~\ref{lem:corner rounding}.

Suppose for the sake of contradiction that $\gamma$ is a lightlike geodesic, affinely parametrised and so that~${\gamma(0)=x}$. Then the curve $\Phi\circ \gamma$ is the straight line passing through the origin with velocity~$\dot{\gamma}(0)$, which immediately gives~$\Phi(p)\horismos \Phi(q)$, a contradiction to~$\Phi(p)\chron \Phi(q)$.
\end{proof}
\end{toappendix}

\begin{lemmarep}\label{lemma:unique lightlines in any spacetime}
    In any spacetime, if $x\after y \after y_1,y_2$ and $y_1\neq y_2$ are spacelike separated, then $x\chron y_i$ for some~$i$.
\end{lemmarep}
\begin{toappendix}
  \label{apx:lemma:unique lightlines in any spacetime}
\end{toappendix}
\begin{proof}
    Let $\gamma,\delta_1,\delta_2$ denote the causal curves corresponding to $x\after y$ and $y\after y_1,y_2$, respectively. Without loss of generality, suppose that $\delta_i$ are parametrised on the unit interval $[0,1]\to M$.
    Consider the set~${S= \delta_1^{-1}(\im(\delta_2))}$, which is nonempty since $\delta_1(0)=\delta_2(0)$ and a strict subset of $[0,1]$ because~${y_1\neq y_2}$ are spacelike separated.
    Since~$[0,1]$ is compact and $\delta_2$ is continuous,~$\im(\delta_2)$ is compact in~$M$. Since~$M$ is Hausdorff,~$\im(\delta_2)$ is thus closed, and by continuity of $\delta_1$ the preimage~$S$ is closed. Hence $S$ has a maximum element $t_0$, the latest time at which~$\delta_1$ overlaps~$\delta_2$. We can further reparametrise $\delta_2$ without loss of generality such that $\delta_1(t_0)=\delta_2(t_0)$, and hence $\delta_1$ and $\delta_2$ never intersect after~$t_0$. Let~${y':= \delta_1(t_0)=\delta_2(t_0)}$ be the last intersection point; see the left diagram in Figure~\ref{fig:uniquelightlines}.
    Note if~$t_0=0$ then~$y'= y$.
    
    Using Theorem~\ref{thm:GNC}, choose a convex normal neighbourhood $U_{y'}\subseteq M$ and geodesic normal coordinates
    $\Phi\colon U_{y'}\to \mathcal{U}_{y'}$.
    Since $U_{y'}$ is open and $\delta_i$ are continuous with $\delta_i(t_0)=y'$, we can find time intervals~$\epsilon_i>0$ such that~$t_0+\epsilon_i\leq 1$ and on $[t_0,t_0+\epsilon_i]$ the curves~$\delta_i$ stay in~$U_{y'}$.
    For any~${t\in (t_0,t_0+\epsilon_i]}$ we then get that $y'\after_{U_{y'}} \delta_i(t)$, and by~\eqref{eq:local relations} we have $0\after \Phi\circ \delta_i(t)$ in $\mathcal{U}_{y'}$.
    From this it follows that~${\delta_i(t) \neq y'}$ for every~$t\in (t_0,t_0+\epsilon_i]$, otherwise~$y'\after_{U_{y'}} y'$ and hence $0\after 0$, which is impossible in Minkowski space.
    Using this, we see from Equation~\eqref{eq:minkowski causal relations} that the time coordinate~${(\Phi\circ \delta_i(t))^0}$ must be strictly greater than zero for all~${t\in (t_0,t_0+\epsilon_i]}$.
    In particular there exists some~$\epsilon >0$ that is below the endpoints:~${\epsilon < (\Phi\circ \delta_i(t_0+\epsilon_i))^0}$, meaning that eventually the curves $\Phi\circ \delta_i$ are above the hyperplane~${\{ x\in \mathbb{R}^{1+n}: x^0= \epsilon\}}$.
    By the intermediate value theorem there exists~${t_i\in (t_0,t_0+\epsilon_i]}$ so that the curves are precisely on the hyperplane:~$(\Phi\circ\delta_i(t_i))^0 = \epsilon$.
    Using this, define the distinct points~$p_i:=\delta_i(t_i)$, from which we get the distinct points~$z_i:=\Phi(p_i)$ on the hypersurface, meaning that $z_1\neq z_2$ are spacelike separated.
    
    We similarly pick a point $p$ in the past of $y'$. If $t_0>0$, find some~$\epsilon_0>0$ so that $\delta_1$ remains within~$U_{y'}$ on the interval $[t_0-\epsilon_0,t_0]$, and define $p:= \delta_1(t_0-\epsilon_0)$ and~$z:=\Phi(p)$; see the right diagram in Figure~\ref{fig:uniquelightlines}. In the case that $t_0=0$, so $y'=y$, pick $p$ analogously but on the curve $\gamma$ instead.

    In total, we have the situation in the normal neighbourhood $U_{y'}$ that $p\after_{U_{y'}} y' \after_{U_{y'}} p_1,p_2$. Using~\eqref{eq:local relations} and its dual, we get the situation in the region $\mathcal{U}_{y'}$ of Minkowski space that~$z\after 0 \after z_1,z_2$ where~$z_1\neq z_2$ are spacelike separated, so by~Corollary~\ref{corollary:unique lightline in Minkowski} we get~$z\chron z_k$ for some~$k$.
    Applying Lemma~\ref{lem:lift chronology} we hence get $p\chron_{U_{y'}} p_k$.

    Finally, by construction we have $x\after p$ and $p_i\after y_i$, so globally $x\after p \chron p_k \after y_k$, and the desired result follows by the push-up rule:~$x\chron y_k$.
\end{proof}

\change{Note that the proof (given in Appendix~\ref{apx:lemma:unique lightlines in any spacetime}) still works if~$x=y$, in which case from~$x\after x$ we get a causal curve from~$x$ to itself, which is all that is needed for the proof. A visualisation of the proof of Lemma~\ref{lemma:unique lightlines in any spacetime} is in Figure~\ref{fig:uniquelightlines}.}

\begin{figure}[t]
    \centering
    \begin{tikzpicture}
    \node[] (gen) at (1,-1) {Global Spacetime};
    \node[label=below right:{$x$}] (x) {$\bullet$};
    \node[label=below right:{$y$}] (y) at (1.2,1) {$\bullet$};
    \draw [] (x.center) to [ curve through ={(.5,.3) . . (.8,.9)  }] (y.center);
    
    \node[label=right:{$y'$}] (yp) at (.7,3) {$\bullet$};
    \node[label=above:{$y_1$}] (y1) at (.3,4.2) {$\bullet$};
    \node[label=below:{$y_2$}] (y2) at (2.4,3.4) {$\bullet$};
    \node[] (diz) at (.5,2.6) {$\bullet$}; 
    \node[] (diz1) at (.4,3.2) {$\bullet$}; 
    \node[] (diz2) at (.7, 3.3) {$\bullet$}; 
    \draw [] (y.center) to [ curve through ={(1,1.5) . . (.4,1.7)  . . (diz) . . (yp) . . (diz1) . . (.1,4)  }] (y1.center);
    \draw [] (y.center) to [ curve through ={(2,1.5)  . . (yp) . .  (diz2) . . (1.6,3.9) }] (y2.center);
    \draw [densely dashed] (diz.center) to [ curve through ={(.55,2.95)}] (diz1.center);
    \node[] (U) at (.7,3.6) {};
    \draw [dotted,thick] (U.center) to [ curve through ={(0,2.8)  . . (.4,2.4) . . (1.3,3.4)  }] (U.center);

    \node[label=above:{$\Phi$},label=below:{$\Phi^{-1}$}] at (3.5,2) {};
    \draw[->] (2.5,2.1) -- (4.5,2.1);
    \draw[->] (4.5,1.9) -- (2.5,1.9);

    \node[] (mink) at (6, -1) {Local Minkowskian $\mathcal{U}_{y'}$};
    \node[label=right:{$\Phi(y')$}] (dyp) at (6, 2) {$\bullet$};
    \draw[] (4,0) -- (dyp.center) -- (4, 4);
    \draw[] (8,0) -- (dyp.center) -- (8, 4);
    \node[label=below right:{$z$}] (z) at (5.5,1.5) {$\bullet$};
    \node[label=below left:{$z_1$}] (z1) at (5,3) {$\bullet$};
    \node[label=below right:{$z_2$}] (z2) at (7,3) {$\bullet$};
    \draw[dashed] (z) -- (z1);
    \draw[thick, dotted] (3.8,3) -- (8.2,3);
\end{tikzpicture}
    \caption{Visualisation of the proof for Lemma~\ref{lemma:unique lightlines in any spacetime} using two-dimensional Minkowski spacetime.
    Dashed (solid) lines represent chronological~(causal)~curves. The dotted region on the left denotes~$U_{y'}$, and the dotted line on the right depicts the spacelike hypersurface defined by $x^0=\epsilon$.}
    \label{fig:uniquelightlines}
\end{figure}

\begin{theorem}\label{thm:spacetimeafter}
    Let $\genericworld$ be a spacetime and $\after$ its after relation. Then
    \begin{equation*}
        \entail{\mframe{\genericworld}{\after}}{}{a\alpha f}.
    \end{equation*}
\end{theorem}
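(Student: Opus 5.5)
Since $\afterformula$ is a Sahlqvist formula with first-order correspondent~\eqref{eq:afterformula:firstorder}, it suffices to verify that first-order condition for $\genericrelation = \after$ on $\genericworld$. So let $x,y,y_1,y_2,z\in\genericworld$ with $x\after y\after y_1$, $y\after y_2$, $x\after z$, $y_1\neq y_2$, and neither $y_1\after y_2$ nor $y_2\after y_1$. We must produce a $t$ with $x\after t\after z$ and $t\after y_1$ or $t\after y_2$.

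The first step is to turn the hypothesis into spacelike separation. For $y_1\neq y_2$ we have $y_1\caus y_2$ iff $y_1\after y_2$, because $\caus$ only adds reflexive pairs. Hence $y_1$ and $y_2$ are spacelike separated in the sense of Lemma~\ref{lemma:unique lightlines in any spacetime}, which applies verbatim. It gives $x\chron y_k$ for some $k$.

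The second step is to find $t$. We have $x\chron y_k$ and $x\after z$, so in particular $x\caus z$. We would like some $t$ with $x\chron t$, $t\chron y_k$ and $t\after z$. Take a timelike curve $\sigma$ from $x$ to $y_k$, and a causal curve $\mu$ from $x$ to $z$. Choose a point $t$ on $\mu$ very close to $x$. We need $t\chron y_k$, and since $I^-(y_k)$ is open and contains $x$ (openness of chronological pasts is standard), every $t$ sufficiently near $x$ satisfies $t\chron y_k$. Taking $t=\mu(s)$ for small $s>0$, we get $x\after t$ (via the initial segment of $\mu$), $t\after z$ (via the remaining segment, nondegenerate since $s<1$), and $t\chron y_k$, so $t\after y_k$ by Lemma~\ref{lemma:relations between causal orders}. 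Hence $t$ witnesses the required existential, and since the frame condition holds, $\mframe{\genericworld}{\after}\models\afterformula$.

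The main obstacle is already handled by Lemma~\ref{lemma:unique lightlines in any spacetime}. What remains delicate is the openness of $I^-(y_k)$ in a general spacetime. If the paper prefers not to cite it, it can be derived locally from Theorem~\ref{thm:GNC}. Pick the point $w=\sigma(r)$ for small $r$, inside a normal neighbourhood $U_x$. Then $I^-_{U_x}(w)$ corresponds under $\Phi$ to an open region containing $0$, because Minkowski chronological relations are open conditions and $\Phi$ is a diffeomorphism near $x$. Points $t$ of $\mu$ close to $x$ then satisfy $t\chron w\chron y_k$, and transitivity finishes. Some care is also needed for degenerate cases such as $z=x$ (a loop). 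These are covered because $\mu$ is a genuine causal curve with positive-length domain, so small initial segments remain nondegenerate.
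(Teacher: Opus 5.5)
Your proof is correct and is essentially the paper's: spacelike separation, Lemma~\ref{lemma:unique lightlines in any spacetime} to get $x\chron y_k$, then a point $t$ on a causal curve from $x$ to $z$, close to $x$, lying in the open set $I^-(y_k)$. This last step is the paper's argument for the case $x\horismos z$, and you rightly note that it also covers $x=z$ and $x\chron z$, so the paper's case split (with its appeal to 2-density of $\chron$) is unnecessary.

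The one slip is in your optional fallback for openness: $\Phi$ centred at $x$ only preserves relations to and from $x$ (as remarked after Theorem~\ref{thm:GNC}), so it tells you nothing about $I^-_{U_x}(w)$ for $w\neq x$. Either cite openness of chronological pasts, as the paper does, or use normal coordinates centred at $w$ in a convex neighbourhood containing $\sigma([0,r])$.
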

\begin{proof}
    Take $x, y, y_1, y_2, z \in M$ such that $x \after y, y_1, y_2, z$ and $y \after y_1, y_2$, where $y_1 \neq y_2$ are unrelated by~$\after$. This implies $y_1$ and $y_2$ are spacelike separated. We need to find $t\in\genericworld$ such that $x\after t \after y_i, z$, for some~$i$. First, by Lemma~\ref{lemma:unique lightlines in any spacetime} we can assume without loss of generality that $x\chron y_1$. In the case~$x=z$ the result is trivial~(pick $t=x$), so assume~$x\neq z$. Thus we have a case distinction of whether $x\chron z$ or $x\horismos z$ (Lemma~\ref{lemma:relations between causal orders}). In the first case, the existence of $t\in \genericworld \setminus \{x\}$ with $x\chron t\chron y_1,z$ follows immediately from 2-density of $\chron$.
    
    Instead, suppose that $x \horismos z$. Let~$\gamma$ be the timelike curve connecting $x\chron y_1$ and let~$\delta$ be the lightcurve connecting $x\horismos z$. Take a point $p\in \im(\gamma)\setminus\{x\}$ so that $x\chron p\chron y_1$.
    Then $x\in I^{-}(p)$, and since chronological cones are open there must exist a point $t\in I^{-}(p)\cap \im(\delta)\setminus\{x\}$, which hence satisfies $x\after t \after y_1$ and $x \after t \after z$, as desired.
\end{proof}

Finally, we build on Corollary~\ref{cor:spacetimeclass} using Theorem~\ref{thm:spacetimeafter} and Lemma~\ref{lem:afterformula}.
\begin{corollary}
    For any spacetime $\genericworld$ we have that $\afterlogic \subseteq \modal{\mframe{\genericworld}{\after}}$.
\end{corollary}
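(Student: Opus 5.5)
The inclusion will follow by checking each generator of $\afterlogic = \mathbf{D4} + ad + \afterformula$ on the frame $\mframe{\genericworld}{\after}$, and then using that the set of formulas valid on a fixed frame is a normal modal logic. The latter fact is standard: validity on a frame contains $K$ and is closed under modus ponens, (Nec) and uniform substitution. Hence $\modal{\mframe{\genericworld}{\after}}$ is a normal modal logic. Since $\afterlogic$ is by definition the smallest normal logic containing its axioms, it suffices to show that every axiom is valid on the frame.

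The steps, in order, are as follows.
\begin{enumerate}[label=(\arabic*)]
    \item For $\mathbf{D4} + ad$, invoke Corollary~\ref{cor:spacetimeclass}(iii) directly. It rests on $\after$ being serial and transitive by Lemma~\ref{lem:relcommonproperties}, and on the density of $\after$.
    \item If density needs to be re-justified, I would argue it as follows. Given $x \after y$, take any interior point of the causal curve from $x$ to $y$. Such a point exists because causal curves are non-degenerate. It is after-related to $x$ and to $y$ by the restricted subcurves.
    \item For $\afterformula$, apply Theorem~\ref{thm:spacetimeafter}, which states exactly that $\entail{\mframe{\genericworld}{\after}}{}{\afterformula}$.
    \item Close under the rules of $\mK$ using the normal-logic fact above. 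This yields $\afterlogic \subseteq \modal{\mframe{\genericworld}{\after}}$.
\end{enumerate}

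The only step needing any care is confirming that validity on a frame is preserved by uniform substitution, since the axioms are stated with specific propositional letters. This is immediate because frame validity quantifies over all valuations. I therefore expect no real obstacle: all of the substantive work is already contained in Theorem~\ref{thm:spacetimeafter}.
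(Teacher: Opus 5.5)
Your proposal is correct and is essentially the paper's own argument. The paper gets the corollary by combining Corollary~\ref{cor:spacetimeclass}(iii), which gives $\mathbf{D4}+ad$, with Theorem~\ref{thm:spacetimeafter}, which gives $\afterformula$. Your density argument (restricting a causal curve to an interior point) and your remark that frame validity is closed under the rules of $\mK$ and uniform substitution spell out details the paper leaves implicit.
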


What this section shows is that there is a connection between the push-up rule, which naturally occurs in spacetimes, and $\afterformula$.
Many of the proofs that we have used for $\afterformula$ are reliant on the push-up rule.
An exploration of the push-up rule and its link to Kripke frames (possibly with multiple relations) is warranted, but left to future work.

\section{Two-dimensional Spacetimes}
\label{sec:separable}
It remains an open problem as to what the modal logic of Minkowski spacetime with $\after$ is, but it is known that the logic of two-dimensional Minkowski spacetime is unique to any $1+n$-dimensional Minkowski spacetime ($n > 2$)~\cite{Shapirovsky2005}.

\subsection{After Formula for Two Dimensions}
\label{sec:separable:after2}
Consider the formula,
\begin{equation*}
    \aftertwoformula := \poss (p_1 \land \lnot p_2 \land \necc \lnot p_2)
                \land \poss (p_2 \land \lnot p_1 \land \necc \lnot p_1)
                \land \poss q
                \rightarrow
                \poss (\poss p_1 \land \poss q) \lor \poss (\poss p_2 \land \poss q),
\end{equation*}
which is $\afterformula$ without the first diamond.\footnote{Read as the after-2 formula/axiom.}
It corresponds to the following first order property:
\begin{equation*}
\begin{aligned}
    \forall x, y_1, y_2, z & \Big[ [x \genericrelation y_1 \land x \genericrelation y_2 \land x \genericrelation z 
    \land y_1 \neq y_2 \land \lnot(y_1  \genericrelation  y_2) \land \lnot(y_2  \genericrelation  y_1) ]
    \\ &  \quad \to
    \exists t (x \genericrelation t \land t \genericrelation z \land (t \genericrelation y_1 \lor t \genericrelation y_2 )) \Big].
\end{aligned}
\end{equation*}
This formula is a Sahlqvist formula and, therefore, any canonical normal modal logic that adds $\aftertwoformula$ is canonical.
Even though $\aftertwoformula$ is a simple modification to $\afterformula$, we can show that it is strictly more expressive within the context of spacetimes.

\begin{figure}[t]
    \centering
        \centering
        \begin{tikzpicture}[nodes={draw, circle}, ->,]
        \node (s) {} [grow'=up]
        child {node[refl] (0) [label=left:{$p_1$}] {}}
        child {node[refl] (1) [label=left:{$p_2$}] {} }
        child {node[refl] (2) [label=left:{$q$}] {} };
        \end{tikzpicture}
        \caption{A finite $\afterlogic$ model that does not have $\aftertwoformula$ on the root.\label{fig:aftertwoexpress}}
\end{figure}

\begin{theorem}
We have that
    \begin{enumerate*}[label=(\roman*)]
        \item $\mathbf{K4} + \aftertwoformula \vdash \afterformula$; \label{thm:aftertwoexpress:expressive}
        \item $\afterlogic \nvdash \aftertwoformula$. \label{thm:aftertwoexpress:inexpressive}
    \end{enumerate*}
    \label{thm:aftertwoexpress}
\end{theorem}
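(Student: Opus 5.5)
For claim~\ref{thm:aftertwoexpress:expressive} I will argue semantically, in the same style as the proof of Lemma~\ref{lem:afterformula}\ref{lem:aaf:k4d2}. Let $\mathcal{M}=\genericmodel$ be a model on a frame validating $\mathbf{K4}+\aftertwoformula$, and let $x$ satisfy the antecedent of $\afterformula$. Then there is $y$ with $x\genericrelation y$ and
\[
y\models \poss(p_1\land\lnot p_2\land\necc\lnot p_2)\land\poss(p_2\land\lnot p_1\land\necc\lnot p_1).
\]
By transitivity ($a4$), $x$ itself satisfies $\poss(p_1\land\lnot p_2\land\necc\lnot p_2)\land\poss(p_2\land\lnot p_1\land\necc\lnot p_1)$. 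Since $x\models\poss q$ as well, the antecedent of $\aftertwoformula$ holds at $x$. Applying $\aftertwoformula$ at $x$ then gives exactly the consequent of $\afterformula$.

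The same argument works syntactically. The theorem $\poss\poss A\to\poss A$ of $\mathbf{K4}$, together with monotonicity of $\poss$, yields $\poss(\poss A\land\poss B)\to\poss A\land\poss B$. Chaining this with $\aftertwoformula$ gives $\afterformula$. Since $\aftertwoformula$ is Sahlqvist, I can alternatively use the first-order correspondent: given $x\genericrelation y\genericrelation y_1,y_2$, transitivity gives $x\genericrelation y_1,y_2$, and the correspondent applied to $x,y_1,y_2,z$ supplies the required $t$.

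For claim~\ref{thm:aftertwoexpress:inexpressive} I will use the frame in Figure~\ref{fig:aftertwoexpress}: an irreflexive root $r$ and three reflexive successors $a_1,a_2,b$, with $r\genericrelation r$ false and no relations among $a_1,a_2,b$ except their reflexive loops. The steps are as follows.
\begin{enumerate}[label=(\arabic*)]
\item Check that the frame is a $\afterlogic$ frame. It is transitive, since every path is $r\to$leaf$\to$same leaf. It is serial, because each leaf loops and $r$ has successors. It is dense: $r\genericrelation a$ factors as $r\genericrelation a\genericrelation a$, and the loops factor trivially.
\item Check that it validates $\afterformula$ through the first-order condition~\eqref{eq:afterformula:firstorder}. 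The antecedent needs $x\genericrelation y\genericrelation y_1,y_2$ with $y_1\ne y_2$. Every $y$ with a predecessor is a leaf whose only successor is itself, so $y_1=y_2$. The antecedent is therefore never satisfied, and the condition holds vacuously.
\item Refute $\aftertwoformula$ at $r$. Take the valuation with $p_1$ true only at $a_1$, $p_2$ only at $a_2$, and $q$ only at $b$. Then $a_1\models p_1\land\lnot p_2\land\necc\lnot p_2$, and symmetrically for $a_2$, so the antecedent holds at $r$. For the consequent we need a successor $t$ of $r$ with $t\models\poss p_i\land\poss q$. But each successor $t$ sees only itself, so it cannot satisfy both $p_i$ and $q$.
\end{enumerate}
Because $\afterlogic$ is sound with respect to this frame, it follows that $\afterlogic\nvdash\aftertwoformula$.

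The only delicate point is step~(2), namely confirming that the $\afterformula$ antecedent really is vacuous in this frame. This requires $y\ne y_1$ or $y_1\ne y_2$ to be impossible when $y$ is reached from $r$, and that holds because every leaf sees only itself. Otherwise both parts are routine.
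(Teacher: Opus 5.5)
Your proposal is correct and follows the paper's proof. For (i) you use the same transitivity argument, and for (ii) you use the same three-leaf countermodel of Figure~\ref{fig:aftertwoexpress}, adding the explicit check (which the paper omits) that $\afterformula$ holds vacuously there. Your syntactic derivation in (i), via $\poss(\poss A\land\poss B)\to\poss A\land\poss B$ in $\mathbf{K4}$, is a small improvement: the purely semantic argument, as the paper gives it, tacitly relies on completeness of $\mathbf{K4}+\aftertwoformula$ (available via Sahlqvist canonicity).
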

\begin{proof}
    \ref{thm:aftertwoexpress:expressive}:
    Let $\mathcal{M} = \mmodel{\genericworld}{\genericrelation}{\genericeval}$ be a $\mathbf{K4} + \aftertwoformula$ model and $x \in \genericworld$.
    Assume that $\mentail{\mathcal{M}}{x}{\poss(\poss (p_1 \land \lnot p_2 \land \lnot \necc p_2) \land \poss(p_2 \land \lnot p_1 \land \lnot \necc p_1)) \land \poss q}$.
    Let $x \genericrelation y$ such that $\mentail{\mathcal{M}}{y}{\poss (p_1 \land \lnot p_2 \land \lnot \necc p_2) \land \poss(p_2 \land \lnot p_1 \land \lnot \necc p_1)}$, then $\mentail{\mathcal{M}}{x}{\poss \poss (p_1 \land \lnot p_2 \land \lnot \necc p_2) \land \poss\poss(p_2 \land \lnot p_1 \land \lnot \necc p_1)}$.
    
    Since $\mathcal{M}$ is transitive, then we have
    $\mentail{\mathcal{M}}{x}{\poss (p_1 \land \lnot p_2 \land \lnot \necc p_2) \land \poss(p_2 \land \lnot p_1 \land \lnot \necc p_1) \land \poss q}$.
    As $\mathcal{M}$ is a $\aftertwoformula$ model, we have $\mentail{\mathcal{M}}{x}{\poss (\poss p_1 \land \poss q) \lor \poss (\poss p_2 \land \poss q)}$ and therefore $\mentail{\mathcal{M}}{x}{\afterformula}$.
    Because $x$ and $\mathcal{M}$ are general, then $\mathbf{K4} + \aftertwoformula \vdash \afterformula$.

    \ref{thm:aftertwoexpress:inexpressive}: A counterexample is provided by the finite model given in Figure~\ref{fig:aftertwoexpress}.
    The frame has all the properties of $\afterlogic$, but cannot express $\aftertwoformula$.
    One of the points $p_1$ or $p_2$ must share a common point with $q$ after the root but they do not.
\end{proof}

\begin{remark}
For finite (spacetime) frames, $\aftertwoformula$ says that if there are successor clusters from an irreflexive point, there are at most two of them each possibly with their own unique chains of clusters (light lines).
\end{remark}

\subsection{Separability of Two-Dimensional Spacetimes}
\label{sec:separable:separability}
With the after relation, we can see that $\aftertwoformula$ holds in two-dimensional Minkowski spacetime but does not hold in any other $1+n$-dimensional Minkowski spacetime.

\begin{lemma}
    \label{lemma:aftertwoinminkowski}
    Let $M = \mathbb{R}^{1+n}$ and $\after$ be the after relation on Minkowski spacetime.
    We have that
    \begin{enumerate}[label=(\roman*)]
        \item $\mentail{\mframe{\mathbb{R}^2}{\after}}{}{\aftertwoformula}$
        \item $\nmentail{\mframe{\mathbb{R}^{1+n}}{\after}}{}{\aftertwoformula}$ for $n \geq 2$
    \end{enumerate}
\end{lemma}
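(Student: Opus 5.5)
The plan is to prove (i) by a direct computation in null coordinates on $\mathbb{R}^2$, and (ii) by an explicit countermodel built from three distinct light rays through the origin. Throughout we use that Minkowski space is causal, so $\after$ is irreflexive there and $x\after y$ iff $x\caus y$ and $x\neq y$.

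For (i), pass to null coordinates $u=x^0-x^1$, $v=x^0+x^1$ on $\mathbb{R}^2$. By \eqref{eq:minkowski causal relations}, $x\caus y$ iff $u_x\le u_y$ and $v_x\le v_y$, so $\caus$ is the product order on $\mathbb{R}^2$.
\begin{itemize}
\item Take $x\after y_1,y_2,z$ with $y_1\neq y_2$ unrelated. For $i=1,2$ define the componentwise minimum $m_i:=(\min(u_{y_i},u_z),\min(v_{y_i},v_z))$, which satisfies $x\caus m_i\caus y_i,z$.
\item Suppose $m_i\neq x$ for some $i$. Pick $t$ strictly between $x$ and $m_i$, for instance the midpoint in $(u,v)$-coordinates. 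Then $x\after t$, and $t\caus y_i,z$ with $t\neq y_i,z$, because $t<m_i\le y_i,z$ in the product order. Hence $t\after y_i,z$, which is the required witness.
\item Suppose instead $m_1=m_2=x$, and assume $u_z>u_x$. Then $u_{y_1}=u_{y_2}=u_x$, so $y_1,y_2$ lie on the same null line $u=u_x$ and are comparable under $\caus$. Being distinct, they are then $\after$-related, which is a contradiction.
\item Hence $u_z=u_x$, and symmetrically $v_z=v_x$. This gives $z=x$, contradicting $x\after z$ in the irreflexive $\after$.
\end{itemize}
So the first-order condition of $\aftertwoformula$ holds, and by Sahlqvist correspondence the formula is valid on the frame.

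For (ii), with $n\ge2$, choose $x=0$ and three points $y_1,y_2,z$ with time coordinate $1$ and pairwise distinct unit spatial vectors $\vec{y_1},\vec{y_2},\vec{z}$. This is possible because the spatial unit sphere has infinitely many points when $n\ge2$.
\begin{itemize}
\item Each of these points satisfies $0\horismos y$ and $0\after y$.
\item Distinct points at equal time are not causally related, so $y_1,y_2,z$ are pairwise spacelike separated.
\item Use the valuation $V(p_1)=\{y_1\}$, $V(p_2)=\{y_2\}$, $V(q)=\{z\}$. The antecedent of $\aftertwoformula$ holds at $0$; in particular $\necc\lnot p_2$ holds at $y_1$, since $y_2$ is not after $y_1$.
\item For the consequent to hold we would need some $t$ with $0\after t$, $t\caus z$ and $t\caus y_i$ for some $i$.
\end{itemize}

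The key step in (ii), and the main obstacle, is showing that no such $t$ exists. From $0\caus t\caus z$ and $0\horismos z$, push-up shows that neither $0\chron t$ nor $t\chron z$ can hold, so $0\horismos t\horismos z$. The triangle-equality argument from the proof of Lemma~\ref{lem:mink:horismos} then forces $\vec t=\lambda\vec z$ with $t^0=\norm{\vec t}$, i.e.\ $t$ lies on the null segment from $0$ to $z$. The same reasoning places $t$ on the null segment from $0$ to $y_i$. These two rays have distinct directions, so they meet only at $0$, giving $t=0$. This contradicts $0\after t$, since $\after$ is irreflexive in Minkowski space. Therefore the consequent fails at $0$, and $\aftertwoformula$ is not valid on $\mframe{\mathbb{R}^{1+n}}{\after}$ for $n\ge2$.
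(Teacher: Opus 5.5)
Your proof is correct. It is also more self-contained than what the paper gives: the paper's proof of this lemma is only a pointer to \cite[Lemma 6.2]{Shapirovsky2005}. Part (i) is additionally recovered later in the paper, as a special case of the theorem that $\aftertwoformula$ holds in every two-dimensional spacetime.

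That later argument is local and geometric. It splits into cases according to which of $y_1,y_2,z$ are chronologically, rather than only lightlike, related to $x$. It then uses 2-density of $\chron$, the fact that $x$ has exactly two future null directions, achronality of lightlike geodesics (Theorem~\ref{thm:2.22}), and openness of $I^{-}$.

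Your argument for (i) instead uses the global null coordinates of flat $\mathbb{R}^2$, in which $\caus$ is the product order. The meet $m_i$ of $y_i$ and $z$ then exists, and any point strictly between $x$ and some $m_i\neq x$ is a witness. The only obstruction, $m_1=m_2=x$, forces $y_1,y_2$ onto a common null line, contradicting their incomparability. This gives a uniform proof with no case split on $\chron$ versus $\horismos$, and no geodesics or normal coordinates. The cost is that it relies on the lattice structure of the Minkowski plane and does not transfer to curved two-dimensional spacetimes, which is what the paper's local argument achieves.

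For (ii), your countermodel takes three pairwise spacelike points on distinct null rays from the origin. Push-up and the triangle-equality computation of Lemma~\ref{lem:mink:horismos} then confine any witness $t$ to the intersection of two distinct null segments, hence $t=0$. This is an explicit, rigorous version of the phenomenon the paper later invokes only heuristically: $\aftertwoformula$ fails once an $\after$-irreflexive point has at least three distinct light lines.
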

\begin{proof}
The argument is similar to~\cite[Lemma 6.2]{Shapirovsky2005}.
\end{proof}

It turns out that $\aftertwoformula$ holds in any two-dimensional spacetime.

\begin{theoremrep}
Let $M$ be a two-dimensional spacetime.
Then, 
\begin{equation*}
\mentail{\mframe{M}{\after}}{}{\aftertwoformula}.    
\end{equation*}
\end{theoremrep}
\begin{proof}
Let $x \after y_1, y_2, z$ with $y_1 \neq y_2$ and $y_i \not\after y_j$.
We need to prove there exists $t\in M$ such that $x\after t\after z, y_i$ for some $i$.
Again, in the case that $x=y_i$ or $x=z$ simply take $t=x$. Thus assume all points are distinct from~$x$, so $\after$ is described via Lemma~\ref{lemma:relations between causal orders}.

If $x \chron y_i$ for some $i$ and $x \chron z$, then by 2-density of $\chron$ we are done.

Suppose now that $x \horismos y_1, y_2, z$, and let $\delta_1,\delta_2,\gamma$ be the corresponding causal curves.
Since $x\horismos y_i$, there are no timelike curves from $x$ to $y_i$, and hence by Theorem~\ref{thm:2.22} each $\delta_i$ is an achronal lightlike geodesic (up to reparametrisation).
Similarly, since $x\horismos z$, the curve $\gamma$ is an achronal lightlike geodesic from $x$ to $z$.

Choose a convex normal neighbourhood $U_x$ of $x$, and consider the restrictions of $\delta_1,\delta_2,\gamma$ to~$U_x$. In a two-dimensional spacetime there are exactly two future-directed lightlike velocities at $x$, and hence exactly two future-directed lightlike geodesics emanating from $x$ in $U_x$.
Moreover, since $y_1$ and $y_2$ are spacelike separated, the geodesics $\delta_1$ and $\delta_2$ cannot coincide, and so the velocities of $\delta_1$ and $\delta_2$ at~$x$ correspond to the two distinct lightlike directions in $T_xM$.
Since $\gamma$ is also lightlike, its velocity at~$x$ must therefore be equal to that of, say, $\delta_1$.
Pick a point $t\in \im(\gamma)\cap \im(\delta_1)\cap U_x\setminus\{x\}$ sufficiently close to $x$ so that $t$ lies on the subcurves from $x$ to $z$ and from $x$ to $y_1$, so that $x\after t\after y_1,z$, as desired.

Finally, consider the scenario $x \horismos y_1, y_2$ and $x \chron z$.
Let $\delta_1,\delta_2$ be the achronal lightlike geodesics from $x$ to $y_1,y_2$ (as above), and let $\gamma$ be a timelike curve from $x$ to $z$.
Pick a point ${z'\in \im(\gamma)\setminus\{x,z\}}$ sufficiently close to $x$ so that $x\chron z'\chron z$.
Since $I^{-}(z')$ is open and contains $x$, and $\delta_1$ is continuous with $\delta_1(0)=x$, there exists ${t\in \im(\delta_1)\setminus\{x\}}$ sufficiently close to $x$ such that $t\in I^{-}(z')$.
Then ${x\after t\after y_1}$. Moreover $t\chron z'$ implies $t\after z'\after z$, hence $t\after z$.
Thus $x\after t\after y_1,z$, completing the proof.
\end{proof}

Denote $\aftertwologic := \mathbf{D4} + ad + \aftertwoformula$.
It is clear from Theorem~\ref{thm:aftertwoexpress} that $\afterlogic \subset \aftertwologic$.
\begin{corollary}
    Let $M$ be a $2$-dimensional spacetime. Then $\aftertwologic \subseteq \modal{\mframe{M}{\after}}$.
\end{corollary}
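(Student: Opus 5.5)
The plan is to verify the first-order correspondent of $\aftertwoformula$ directly on the frame $\mframe{M}{\after}$. Since $\aftertwoformula$ is Sahlqvist, validity on the frame is equivalent to this correspondent. So take $x\after y_1,y_2,z$ with $y_1\neq y_2$ and $y_1,y_2$ unrelated by $\after$. The goal is to produce $t$ with $x\after t\after z$ and $t\after y_i$ for some $i$.

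First dispose of the degenerate cases: if $x=z$ or $x=y_i$, take $t=x$. Otherwise Lemma~\ref{lemma:relations between causal orders} splits each of the relations $x\after y_1$, $x\after y_2$, $x\after z$ into $\chron$ or $\horismos$, and we argue by cases on these.

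\textbf{Case 1: $x\chron y_i$ for some $i$ and $x\chron z$.} Then 2-density of $\chron$ gives $t$ with $x\chron t\chron y_i,z$.

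\textbf{Case 2: $x\chron y_1$ and $x\horismos z$.} This is exactly the argument in the proof of Theorem~\ref{thm:spacetimeafter}. Choose $p$ on the timelike curve with $x\chron p\chron y_1$. Since $I^-(p)$ is open and contains $x$, it meets the lightlike curve to $z$ at some $t\neq x$ near $x$.

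\textbf{Case 3: $x\horismos y_1,y_2$ and $x\chron z$.} This is symmetric to Case 2, with the roles swapped. Choose $z'$ with $x\chron z'\chron z$, and take $t\in I^-(z')$ on the curve to $y_1$ near $x$. Then $t\after z'\after z$ gives $t\after z$.

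\textbf{Case 4: $x\horismos y_1,y_2,z$.} This is the genuinely two-dimensional case. By Theorem~\ref{thm:2.22}, each of the three curves is an achronal lightlike geodesic. Work in a convex normal neighbourhood $U_x$ from Theorem~\ref{thm:GNC}. In dimension $1+1$ the future null cone of $T_xM$ consists of exactly two rays, so $x$ has only two future lightlike geodesic directions.

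The geodesics to $y_1$ and $y_2$ must leave $x$ in distinct directions. If they shared an initial velocity, local uniqueness of geodesics would make them coincide as one maximal lightlike geodesic. Then one of $y_1,y_2$ would lie on the geodesic segment to the other, giving $y_1\after y_2$ or $y_2\after y_1$, a contradiction.

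Hence the geodesic to $z$ shares its initial direction with, say, the one to $y_1$. By uniqueness, the two curves agree on an initial segment. Any point $t\neq x$ on that common initial segment, taken before both $z$ and $y_1$, satisfies $x\after t\after y_1,z$.

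The main obstacle is Case 4. There one must ensure that the "same direction implies same curve" argument really yields $t$ preceding both endpoints along the common geodesic. This needs care with affine parametrisation and with geodesics possibly reaching $y_1$ or $z$ within $U_x$. It also requires justifying that distinct directions are forced by spacelike separation even when one geodesic is a proper initial segment of the other.
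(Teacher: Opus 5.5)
Your argument follows the paper's route: you verify the first-order correspondent of $\aftertwoformula$ by splitting on $\chron$ versus $\horismos$, and you handle the all-lightlike case with Theorem~\ref{thm:2.22} and the two null directions at $x$. Your explicit Case~2 ($x\chron y_1$, $x\horismos z$) even covers a case that the paper's proof of the two-dimensional theorem passes over (it is handled by the argument of Theorem~\ref{thm:spacetimeafter}), and your justification that the geodesics to $y_1,y_2$ leave $x$ in distinct directions is more careful than the paper's.

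To conclude the corollary as stated, add the line you omitted: $\mathbf{D4}+ad\subseteq\modal{\mframe{M}{\after}}$ holds by Corollary~\ref{cor:spacetimeclass}(iii). Moreover, the set of formulas valid on a frame is closed under modus ponens, necessitation and uniform substitution, so it contains all of $\aftertwologic$.
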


In Lemma~\ref{lemma:aftertwoinminkowski}, we were shown how $\aftertwoformula$ only holds in two dimensions, providing a separation from $1+n$-dimensional Minkowski spacetime.
However, to show a full separation of the logics of 2-dimensional spacetimes from higher dimensional ones, it needs to be shown that $\aftertwoformula$ does not hold in higher dimensional spacetimes.
We note that non-cNTV spacetimes are reflexive in $\after$, meaning that $\aftertwoformula$ holds in those spacetimes.
However, for cNTV spacetimes of dimension greater than 3, it is likely that $\aftertwoformula$ does not hold.
Proving this requires that any high dimensional, cNTV spacetime has at least 3 different light lines from a point irreflexive in $\after$.

We end this section by forming the following conjecture.
\begin{conjecture}
    Let $n \geq 3$.
    For any cNTV $n$-dimensional spacetime, $M$, we have that
    $\nmentail{\mframe{M}{\after}}{}{\aftertwoformula}$ ($\aftertwologic \nsubseteq \modal{\mframe{M}{\after}}$).
\end{conjecture}

\section{Effects of Causal Ladder Properties on Modal Logics}
\label{sec:modalcausalladder}
We now analyse the modal logic of spacetimes on different levels of the causal ladder.
The proofs are fairly obvious throughout, simply relying on the properties of the relevant causal relation.

We have the following obvious result.
\begin{corollary}
    Let $\genericworld$ be a spacetime and $\chron$ and $\after$ be the respective causal relations, then:
    \begin{enumerate}[label=(\roman*)]
        \item if $\chron$ is reflexive, $\mathbf{S4} \subseteq \modal{\mframe{\genericworld}{\chron}}$;
        \item if $\after$ is reflexive, $\mathbf{S4} \subseteq \modal{\mframe{\genericworld}{\after}}$.
    \end{enumerate}
\end{corollary}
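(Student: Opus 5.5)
The plan is to apply the standard correspondence between frame properties and modal formulas, exactly as in Corollary~\ref{cor:spacetimeclass}. Recall that $\mathbf{S4} = \mK + a4 + aT$, and that $a4$ and $aT$ are valid on every transitive, respectively reflexive, frame. Since validity on a frame is preserved under modus ponens, (Nec) and uniform substitution, it suffices to show that in each case the relation is both reflexive and transitive. Then the whole normal logic $\mathbf{S4}$ is contained in the set of formulas valid on that frame.

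For (i), transitivity of $\chron$ is given by Lemma~\ref{lem:relcommonproperties}, and reflexivity is the hypothesis. So $\mframe{\genericworld}{\chron}$ is a preorder frame, and each instance of $a4$ and $aT$ holds at every point under every valuation. For instance, if $x \models \poss\poss A$, pick $x \chron u \chron w$ with $w \models A$; transitivity gives $x \chron w$, so $x \models \poss A$. Similarly, if $x \models A$, reflexivity gives $x \chron x$, so $x \models \poss A$. For (ii) the argument is verbatim the same, using transitivity of $\after$ from Lemma~\ref{lem:relcommonproperties} together with the assumed reflexivity of $\after$.

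Finally, a routine induction on derivations shows that frame validity is closed under the rules of $\mK$: $K$ holds on all frames, and modus ponens and (Nec) preserve validity on a fixed frame. This yields $\mathbf{S4} \subseteq \modal{\mframe{\genericworld}{\chron}}$ and $\mathbf{S4} \subseteq \modal{\mframe{\genericworld}{\after}}$ respectively.

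There is no real obstacle here. The only point worth noting is that reflexivity of $\chron$ (total viciousness) or of $\after$ (failure of cNTV) is an assumption and not something that must be derived. Transitivity in both cases comes from concatenating timelike, respectively causal, curves, as recorded earlier in the paper.
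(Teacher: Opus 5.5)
Your proposal is correct and follows the paper's argument. The paper calls this result obvious and gives no proof beyond what Corollary~\ref{cor:spacetimeclass} already supplies: transitivity of $\chron$ and $\after$ from Lemma~\ref{lem:relcommonproperties}, together with the assumed reflexivity, makes each frame a preorder, so it validates $\mathbf{S4}$.
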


\subsection{Totally and Non-totally Vicious}
Totally vicious spacetimes are reflexive in $\chron$ and, consequently, in $\after$.
Note that $\chron$ and $\chroneq$ become indistinguishable from one another ($x \chron y$ iff $x \chroneq y$), and similarly for $\after$ and $\caus$.
What becomes more apparent is that $\chron$ and $\caus$ are indistinguishable.
\begin{lemma}
    Let $\genericworld$ be a totally vicious spacetime and $x, y \in \genericworld$.
    Then $x \chron y$ iff $x \caus y$.
\end{lemma}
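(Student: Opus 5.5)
Since $\chron$ always implies $\caus$, only the converse needs work. The plan is to use that reflexivity of $\chron$ turns every point into its own chronological past and future, and then absorb any causal relation using the push-up rule.

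The forward implication is immediate from Lemma~\ref{lemma:relations between causal orders}: $x\chron y$ gives $x\after y$, hence $x\caus y$. No assumption on the spacetime is needed here.

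For the converse, suppose $x\caus y$. Since the spacetime is totally vicious, $x\chron x$. We then have the chain $x\chron x\caus y$, and the first clause of the push-up rule ($x\chron y\caus z\implies x\chron z$, with the middle point taken to be $x$) yields $x\chron y$. This argument covers both cases in the definition of $\caus$ (a causal curve from $x$ to $y$, or $x=y$). In the degenerate case $x=y$ it simply returns $x\chron x$, which we already have.

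There is no real obstacle. The only care needed is to confirm that the push-up rule, as stated, allows the intermediate point to coincide with an endpoint. It does, because it is stated for arbitrary $x,y,z\in\genericworld$. As a corollary, in a totally vicious spacetime the relations $\chron$, $\chroneq$, $\after$ and $\caus$ all coincide, since they are sandwiched between $\chron$ and $\caus$.
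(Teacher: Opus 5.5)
Your proof is correct and is essentially the paper's argument. The only difference is that the paper uses $y\chron y$ with the second push-up clause ($x\caus y\chron y \implies x\chron y$), whereas you use $x\chron x$ with the first clause; these are mirror images of the same idea.
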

\begin{proof}
    If $x \chron y$, there is a timelike path from $x$ to $y$, which is also a causal path.
    Therefore, $x \caus y$.
    If $x \caus y$, then we have that $x \chron y$ since $y \chron y$ and the push-up rule is used to obtain the result.
\end{proof}
This leads to the following result.
\begin{theorem}
    Let $\genericworld$ be a totally vicious spacetime.
    Then,
    \begin{equation*}
        \mathbf{S4} \subseteq \modal{\mframe{\genericworld}{\chron}} = \modal{\mframe{\genericworld}{\chroneq}} = \modal{\mframe{\genericworld}{\after}} = \modal{\mframe{\genericworld}{\caus}}.
    \end{equation*}
\end{theorem}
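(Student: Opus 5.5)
The plan is to reduce everything to the preceding lemma. In a totally vicious spacetime, $x\chron y$ iff $x\caus y$ for all $x,y$. It then suffices to show that the four relations $\chron$, $\chroneq$, $\after$, $\caus$ coincide \emph{as relations} on $\genericworld$. Equal relations on the same set of worlds give literally the same Kripke frame, so their modal logics are identical. The inclusion $\mathbf{S4}$ then follows from Corollary~\ref{cor:spacetimeclass}(i) applied to $\caus$ (or to $\chroneq$), which holds in any spacetime.

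For the coincidence, I would use the chain of inclusions from Lemma~\ref{lemma:relations between causal orders} together with the definitions. We always have ${\chron}\subseteq{\chroneq}\subseteq{\caus}$, since the timelike case of $\chroneq$ gives $\caus$ and the reflexive case is covered by reflexivity of $\caus$. We also always have ${\chron}\subseteq{\after}\subseteq{\caus}$. By the preceding lemma, total viciousness gives ${\caus}\subseteq{\chron}$. Both chains therefore collapse, and ${\chron}={\chroneq}={\after}={\caus}$.

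No step is a real obstacle; the argument is bookkeeping. The only point needing care is $\chroneq$: its extra reflexive pairs $x\chroneq x$ must already lie in $\chron$. This follows directly from the definition of totally vicious, namely $x\chron x$ for all $x$. Likewise, the reflexive pairs of $\caus$ are absorbed by the preceding lemma.
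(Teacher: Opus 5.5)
Your proposal is correct and matches the paper's own (implicit) argument. In a totally vicious spacetime, the preceding lemma ($x\chron y$ iff $x\caus y$), together with the inclusions ${\chron}\subseteq{\chroneq},{\after}\subseteq{\caus}$, collapses all four relations into one, so the four frames are literally identical, and $\mathbf{S4}$ containment follows from reflexivity and transitivity of $\caus$ (Corollary~\ref{cor:spacetimeclass}).
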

For non-totally vicious spacetimes, the logics for $\chron$ are restricted since reflexivity is lost.
\begin{theorem}
    If $\genericworld$ is a non-totally vicious spacetime, then $\mathbf{S4} \nsubseteq \modal{\mframe{\genericworld}{\chron}}$.
\end{theorem}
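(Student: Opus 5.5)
The plan is to exhibit a valuation on the frame $\mframe{\genericworld}{\chron}$ that refutes the reflexivity axiom $aT := A \rightarrow \poss A$, which is an axiom of $\mathbf{S4}$. Since $\mathbf{S4} = \mathbf{K4} + aT$ contains $aT$ (with $A$ instantiated as an atomic proposition $p$), it suffices to find a single model on this frame and a single world where $p \rightarrow \poss p$ fails.

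By the definition of non-totally vicious, there is some $x \in \genericworld$ with not $x \chron x$. I will fix this $x$ and define the valuation $\genericeval(p) = \{x\}$ (all other atomic propositions arbitrary, say empty). Then $\mentail{\mmodel{\genericworld}{\chron}{\genericeval}}{x}{p}$ holds trivially. For $\poss p$ to hold at $x$ we would need some $y$ with $x \chron y$ and $y \in \genericeval(p)$, that is, $x \chron x$, which is excluded by the choice of $x$. Hence $\nmentail{\mmodel{\genericworld}{\chron}{\genericeval}}{x}{p \rightarrow \poss p}$, so $aT \notin \modal{\mframe{\genericworld}{\chron}}$.

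Consequently $\mathbf{S4} \nsubseteq \modal{\mframe{\genericworld}{\chron}}$. There is no real obstacle; the only point to take care over is recording that the reflexivity axiom is refuted by a valuation on the given frame (rather than merely noting that the relation is not reflexive). The standard correspondence between $aT$ and reflexivity is used only in the easy direction, and it is witnessed directly by the singleton valuation above.
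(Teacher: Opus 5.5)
Your proposal is correct and takes essentially the same approach as the paper. The paper gives no written proof and only remarks that reflexivity of $\chron$ is lost in a non-totally vicious spacetime, so $aT$ fails on the frame. Your singleton valuation $V(p)=\{x\}$ at a point with not $x \chron x$ is exactly the explicit witness that makes this remark rigorous.
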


Note that $\after$ and $\caus$ are not affected by a spacetime being NTV ($\after$ is still reflexive), and therefore their modal logics are not affected either.
\begin{corollary}
    If $M$ is NTV and not cNTV, then $\mathbf{S4} \subseteq \modal{\mframe{\genericworld}{\after}} = \modal{\mframe{\genericworld}{\caus}}$.
\end{corollary}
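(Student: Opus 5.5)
We need to show that if $M$ is NTV but not cNTV, then $\mathbf{S4} \subseteq \modal{\mframe{\genericworld}{\after}} = \modal{\mframe{\genericworld}{\caus}}$. The approach has two parts. First, show that the relations $\after$ and $\caus$ coincide as subsets of $M\times M$, so the two frames are literally the same frame and have the same logic. Second, invoke Corollary~\ref{cor:spacetimeclass}(i), which gives $\mathbf{S4}\subseteq\modal{\mframe{\genericworld}{\caus}}$, or equivalently the earlier corollary stating that reflexivity of $\after$ yields $\mathbf{S4}\subseteq\modal{\mframe{\genericworld}{\after}}$.

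For the first part, I unfold the definition: $M$ is not cNTV means there is no $x$ with $\neg(x\after x)$, i.e.\ $x\after x$ for every $x\in M$. The NTV hypothesis plays no role here; it only locates the class on the ladder via Proposition~\ref{prop:causcntvntv}. By Lemma~\ref{lemma:relations between causal orders}, $x\after y$ implies $x\caus y$. Conversely, suppose $x\caus y$. By definition, either there is a causal curve from $x$ to $y$, in which case $x\after y$, or $x=y$, in which case $x\after x$ holds by reflexivity of $\after$. Hence ${\after}={\caus}$.

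Since the frames $\mframe{\genericworld}{\after}$ and $\mframe{\genericworld}{\caus}$ are identical, their sets of valid formulae coincide. Moreover, $\after$ is transitive by Lemma~\ref{lem:relcommonproperties} and now also reflexive, so every formula of $\mathbf{S4}$ (namely $K$, $a4$, $aT$, and closure under modus ponens and (Nec)) is valid on the frame.

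The argument is routine throughout. The only point needing care is the direction $x\caus y\Rightarrow x\after y$ in the degenerate case $x=y$, and that is exactly where the failure of cNTV is used.
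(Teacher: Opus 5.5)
Your proposal is correct and follows essentially the paper's reasoning: the failure of cNTV makes $\after$ reflexive, so $\after$ coincides with $\caus$ (as the paper already observes in the totally vicious case), and $\mathbf{S4}$ then follows from reflexivity together with transitivity, via the preceding corollary or Corollary~\ref{cor:spacetimeclass}(i). You are also right that the NTV hypothesis is never used; it only places this class of spacetimes on the causal ladder.
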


\subsection{Chronological}
Irreflexivity of $\chron$ means that the normal modal logic does not fundamentally change from a non-totally vicious spacetime.
However, by considering $\gabb$, we obtain the following.

\begin{theorem}
    If $M$ is a chronological spacetime, then $\mathbf{OI} + \gabb \subseteq \modal{\mframe{\genericworld}{\chron}}$.
\end{theorem}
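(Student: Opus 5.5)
The claim is that $\mathbf{OI} + \gabb \subseteq \modal{\mframe{\genericworld}{\chron}}$ for a chronological spacetime. I would split it into two parts. First, $\mathbf{OI} \subseteq \modal{\mframe{\genericworld}{\chron}}$ holds for every spacetime by Corollary~\ref{cor:spacetimeclass}. Second, I need to show that the set of formulas valid on $\mframe{\genericworld}{\chron}$ is closed under the rule $\gabb$. Since $\modal{\cdot}$ of a frame is a normal modal logic, it is then a normal logic containing $\mathbf{OI}$ and closed under $\gabb$. Hence it contains the smallest such logic, $\mathbf{OI}+\gabb$.

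For the closure under $\gabb$, I would use the standard argument that Gabbay's rule is sound on irreflexive frames.

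\begin{enumerate}[label=(\arabic*)]
\item Suppose $\lnot(p \rightarrow \poss p) \rightarrow \phi$ is valid on $\mframe{\genericworld}{\chron}$, where $p$ does not occur in $\phi$. Suppose for contradiction that $\phi$ fails at some point $x$ under some valuation $\genericeval$.
\item Define $\genericeval'$ to agree with $\genericeval$ on every atom except $p$, and set $\genericeval'(p)=\{x\}$.
\item Since $p$ does not occur in $\phi$, the truth of $\phi$ at $x$ does not change, so $\phi$ still fails at $x$ under $\genericeval'$.
\item Because the spacetime is chronological, $x \chron x$ does not hold. So no $\chron$-successor of $x$ satisfies $p$, i.e.\ $\poss p$ is false at $x$, while $p$ is true at $x$.
\item Thus $\lnot(p\rightarrow\poss p)$ holds at $x$, and validity of the premise forces $\phi$ at $x$, a contradiction.
\end{enumerate}

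The one step needing care is step (3): the truth of a formula at a point depends only on the valuations of the atoms occurring in it. This is routine by induction on formulas. The main conceptual point is step (4), where irreflexivity of $\chron$, i.e.\ chronology, is used exactly once. No spacetime geometry beyond this is needed.
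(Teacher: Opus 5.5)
Your proof is correct and follows the paper's implicit argument. The paper gives no explicit proof, treating the result as immediate from Corollary~\ref{cor:spacetimeclass} (for $\mathbf{OI}$) together with irreflexivity of $\chron$ in chronological spacetimes. You have written out the standard soundness argument for $\gabb$ on irreflexive frames (setting $\genericeval'(p)=\{x\}$) that the paper leaves to the reader.
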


Note how the logic of $\after$ or the reflexive causal relations have not changed at the levels of non-totally vicious and chronological from a totally vicious spacetime.

\subsection{Causally Non-Totally Vicious (cNTV)}
Having introduced the notion of cNTV in Section~\ref{sec:spacetime:cntv}, we now see the results of separating the causal condition in the modal logic of $\after$.
\begin{theorem}
    If $\genericworld$ is a causally non-totally vicious spacetime, then $\mathbf{S4} \nsubseteq \modal{\mframe{\genericworld}{\after}}$.
\end{theorem}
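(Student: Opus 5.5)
The claim is that $\mathbf{S4} \nsubseteq \modal{\mframe{\genericworld}{\after}}$ for cNTV $\genericworld$. The plan is to show that the reflexivity axiom $aT := A \rightarrow \poss A$, which belongs to $\mathbf{S4}$, fails on the frame $\mframe{\genericworld}{\after}$. Since $\modal{F}$ consists of the formulas valid under every valuation of $F$, it is enough to give a single valuation and a single world at which $aT$ is false. That witnesses $aT \notin \modal{\mframe{\genericworld}{\after}}$, and hence the non-inclusion.

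By the definition of cNTV, pick $x \in \genericworld$ with not $x \after x$. Take the atomic proposition $p$ and set $\genericeval(p) = \{x\}$. Then $\mentail{\mathcal{M}}{x}{p}$ holds. For $\poss p$ to hold at $x$ we would need some $y$ with $x \after y$ and $y \in \genericeval(p)$, which forces $y = x$ and so $x \after x$, contradicting the choice of $x$. Therefore $\nmentail{\mathcal{M}}{x}{p \rightarrow \poss p}$, so $aT$ is refuted at $x$ in this model on the frame. It follows that $aT \notin \modal{\mframe{\genericworld}{\after}}$ while $aT \in \mathbf{S4}$, which establishes the theorem.

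There is no real obstacle: the argument is just the standard correspondence fact that $aT$ is valid on a frame exactly when the relation is reflexive, applied to one irreflexive point. The only thing to take care over is that the argument uses only the existence of one point that is not $\after$-related to itself. No global irreflexivity is needed, so the proof covers non-causal cNTV spacetimes such as the one in Figure~\ref{fig:cntvchron:cntv}. This mirrors the earlier non-totally vicious result for $\chron$.
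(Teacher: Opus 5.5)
Your proposal is correct and matches the paper's approach. The paper only notes that these ladder results follow from the properties of the relation: here, cNTV means $\after$ loses reflexivity. Your valuation $\genericeval(p)=\{x\}$ at an $\after$-irreflexive point $x$ makes that explicit by refuting $aT \in \mathbf{S4}$.
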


When considering a spacetime that is both chronological and cNTV, the two respective theorems are used together to reveal that both $\chron$ and $\after$ have different modal logics from a NTV spacetime.

\subsection{Causal}
A causal spacetime, in a similar way to a chronological spacetime, does not change the normal modal logics of its causal relations, but the logic under $\after$ can be changed if $\gabb$ is considered.
\begin{theorem}
    If $\genericworld$ is a causal spacetime, then $\afterlogic + \gabb \subseteq \modal{\mframe{\genericworld}{\after}}$.
\end{theorem}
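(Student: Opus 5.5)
We need to show that if $M$ is causal then every formula of $\afterlogic$ is valid on $\mframe{\genericworld}{\after}$, and moreover that the set of validities is closed under $\gabb$. The first part is already in hand: by the corollary following Theorem~\ref{thm:spacetimeafter}, $\afterlogic \subseteq \modal{\mframe{\genericworld}{\after}}$ for any spacetime. Since $\afterlogic + \gabb$ is the smallest logic containing $\afterlogic$ and closed under $\gabb$, it suffices to show that $\modal{\mframe{\genericworld}{\after}}$ is a normal modal logic closed under $\gabb$.

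Normality is standard, since the validities of any frame form a normal modal logic. For closure under $\gabb$ I will use the lemma in Section~\ref{sec:spacetime:cntv}: $M$ is causal iff $\after$ is irreflexive. So the frame is irreflexive, and the key step is the usual soundness argument for Gabbay's rule on irreflexive frames. Suppose $\lnot(p \rightarrow \poss p) \rightarrow \phi$ is valid on the frame, $p$ does not occur in $\phi$, and, for contradiction, $\phi$ fails at some world $x$ under a valuation $\genericeval$. Define $\genericeval'$ to agree with $\genericeval$ except that $\genericeval'(p) = \{x\}$. Because $p$ does not occur in $\phi$, $\phi$ still fails at $x$ under $\genericeval'$ (a routine induction on $\phi$). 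Also $p$ holds at $x$, and $\poss p$ fails at $x$, because the only $p$-world is $x$ and not $x \after x$ by irreflexivity. Hence $\lnot(p\rightarrow\poss p)$ holds at $x$ while $\phi$ fails there, contradicting the validity of the premise. Therefore $\phi$ is valid, and the frame's logic is closed under $\gabb$.

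Putting these together gives $\afterlogic + \gabb \subseteq \modal{\mframe{\genericworld}{\after}}$. There is no real obstacle here. The only points needing care are identifying irreflexivity of $\after$ with causality via the earlier lemma, and checking that the valuation change does not affect $\phi$, which is exactly where the side condition on $p$ is used.
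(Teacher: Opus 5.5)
Your proof is correct and follows essentially the route the paper intends, although the paper leaves this theorem without an explicit proof. It combines the earlier corollary that $\afterlogic$ is valid on the after-frame of every spacetime with the lemma that causality is equivalent to irreflexivity of $\after$, and then applies the standard soundness of Gabbay's rule on irreflexive frames via the valuation $V'(p)=\{x\}$, a step you spell out that the paper treats as obvious.
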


\subsection{Distinguishing}
We begin first by extending the notion of a spacetime being distinguishing into the Kripke frame setting.

\begin{definition}
    Let $F = \mframe{\genericworld}{\genericrelation}$.
    We say $F$ is future-distinguishing if for all $x,y \in \genericworld$ whenever ${\genericrelation}(x) = {\genericrelation}(y)$, then $x = y$.
    We say $F$ is past-distinguishing if for all $x,y \in \genericworld$ whenever ${\genericrelation^{-1}}(x) =~{\genericrelation^{-1}}(y)$, then $x = y$.
    Finally, $F$ is distinguishing if it is both past- and future-distinguishing.
\end{definition}

Thus, the spacetime definition of distinguishing is preserved when considering the frame of a spacetime with $\chron$ (since $I^{\pm}(x) = {\genericrelation}^{\pm 1}(x)$).
Our goal is to find some normal modal logic that can capture distinguishing spacetimes.
However, despite being able to maintain definitions across domains, it becomes clear that a past-distinguishing frame has no associated modal formula. \change{The proofs of the following results are by bisimulation, see~\cite{Blackburn2001} for more information.}

\begin{figure}
    \centering
    \begin{tikzpicture}[nodes={draw, circle}, <-,]
    \node[label={[label distance=-.5cm]above:$F_1 = \mframe{W_1}{R_1}$},label=left:{$x$}] (s) {} [grow'=down]
        child {node[refl] (0) [label=left:{$z$}] {}}
        child {node[refl] (1) [label=left:{$y$}] {}};
    \node[label={[label distance=-.5cm]above:$F_2 = \mframe{W_2}{R_2}$},label=right:{$x'$}] (n) at (4,0) {} [grow'=down]
            child {node[refl] (n0) [label=right:{$y'$}] {}};
            
    \draw[->,dotted,thick] (s) to[bend left] (n);
    \draw[->,dotted,thick] (0) to[bend right] (n0);
    \draw[->,dotted,thick] (1) to[bend right] (n0);
    \end{tikzpicture}
    \caption{A bisimulation from a past-distinguishing frame to one that is not.\label{fig:pastdistframes}}
\end{figure}

\begin{theorem}
There is no modal formula, $\phi$, such that if $\mframe{\genericworld}{\genericrelation} \models \phi$ then $\mframe{\genericworld}{\genericrelation}$ is past-distinguishing.
\end{theorem}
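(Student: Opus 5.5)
The plan is to argue via the standard invariance of modal validity under surjective bounded morphisms (p-morphisms), using the two frames of Figure~\ref{fig:pastdistframes}. Suppose, for contradiction, that some formula $\phi$ defines past-distinguishing frames; in particular we may assume that $\phi$ is valid on every past-distinguishing frame and that any frame validating $\phi$ is past-distinguishing. (If the statement is read literally only as ``validity of $\phi$ forces past-distinguishing'', we additionally need $\phi$ to be valid on at least one frame; otherwise $\phi=\bot$ trivially works. I will therefore make the reading precise as ``$\phi$ defines the class'', and this is the point I expect needs most care in stating.)

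First I would check that $F_1=\mframe{W_1}{R_1}$ is past-distinguishing. Here $W_1=\{x,y,z\}$ and $R_1=\{(x,y),(x,z),(y,y),(z,z)\}$. The predecessor sets are $R_1^{-1}(x)=\varnothing$, $R_1^{-1}(y)=\{x,y\}$ and $R_1^{-1}(z)=\{x,z\}$, which are pairwise distinct. Next I would check that $F_2$, with $W_2=\{x',y'\}$ and $R_2=\{(x',y'),(y',y')\}$, is not past-distinguishing. This is slightly subtle: $R_2^{-1}(x')=\varnothing$ and $R_2^{-1}(y')=\{x',y'\}$ differ. So the figure as drawn does not immediately give a non-past-distinguishing target, and I would adjust the target frame. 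The simplest choice is a two-point frame with $u\,R\,u$, $v\,R\,v$ and no other edges; it is not past-distinguishing after swapping roles, since $R^{-1}(u)=\{u\}\neq\{v\}$. Better still, I would take the one-point reflexive frame $\mframe{\{r\}}{\{(r,r)\}}$ as a base and the source frame $F=\mframe{\{a,b\}}{\{a,b\}\times\{a,b\}}$. In $F$ we have $R^{-1}(a)=R^{-1}(b)=\{a,b\}$, so $F$ is not past-distinguishing. Its image under the map to the point is trivially past-distinguishing.

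Since validity is preserved along surjective bounded morphisms (from source to image), I would reverse the direction: I want a past-distinguishing frame mapping onto a non-past-distinguishing one. I would use the frames of Figure~\ref{fig:pastdistframes}, with arrows read as the frame relation running $y,z\to x$ (the tikz uses ``$<-$'', so $y R_1 x$ and $z R_1 x$ with $y,z$ reflexive). Then $R_1^{-1}(x)=\{y,z\}$, $R_1^{-1}(y)=\{y\}$ and $R_1^{-1}(z)=\{z\}$, so $F_1$ is past-distinguishing. In $F_2$ we have $R_2^{-1}(x')=\{y'\}=R_2^{-1}(y')$, so $F_2$ is not past-distinguishing. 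The map $f$ sending $x\mapsto x'$ and $y,z\mapsto y'$ is surjective. I would verify the forth condition edge by edge, and the back condition: $y'R_2 x'$ is witnessed by $yR_1x$, and $y'R_2y'$ by $yR_1y$, with the point $x'$ having no successors. So $f$ is a surjective bounded morphism.

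To conclude, $\phi$ valid on the past-distinguishing frame $F_1$ gives $F_2\models\phi$ by preservation, yet $F_2$ is not past-distinguishing, a contradiction. The main obstacle is purely bookkeeping: fixing the orientation of the relation in the figure and stating the theorem as a definability claim. The morphism verification itself is routine.
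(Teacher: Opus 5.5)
Your final argument is correct and is exactly the paper's bisimulation argument, under the definability reading you rightly adopt; even non-trivial formulas such as $p\leftrightarrow\necc p$ force the relation to be the identity, which is past-distinguishing, so the one-directional reading is false. That argument reads the figure's arrows as $y,z\to x$, checks that $F_1$ is past-distinguishing while $R_2^{-1}(x')=R_2^{-1}(y')=\{y'\}$, and verifies that $x\mapsto x'$, $y,z\mapsto y'$ is a surjective bounded morphism transferring validity from $F_1$ to $F_2$. You should delete the second-paragraph detour, which contains a false claim (two isolated reflexive points \emph{do} form a past-distinguishing frame) and a morphism in the useless direction.
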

\begin{proof}
The frames in Figure~\ref{fig:pastdistframes} are bisimilar and so act as a counterexample.
\end{proof}

Similarly, future-distinguishing frames have no associated modal formula.
\begin{theorem}
    There is no modal formula, $\phi$, such that if $\mframe{\genericworld}{\genericrelation} \models \phi$ then $\mframe{\genericworld}{\genericrelation}$ is future-distinguishing.
\end{theorem}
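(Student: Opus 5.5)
The statement should be read as in the past-distinguishing case: there is no modal formula $\phi$ whose frames are exactly the future-distinguishing ones. Taken literally (only the implication) it would be vacuous, since $\phi=\bot$ is valid on no frame. The plan is to show that the class of future-distinguishing frames is not closed under surjective bounded morphic images, whereas every modally definable class is. So I would construct a future-distinguishing frame $F_1$ together with a surjective bounded morphism (p-morphism) onto a frame $F_2$ that is not future-distinguishing. If some $\phi$ defined the class, then $F_1\models\phi$, hence $F_2\models\phi$ by preservation of validity under surjective bounded morphisms~\cite{Blackburn2001}, and so $F_2$ would be future-distinguishing, a contradiction.

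For the frames, take $F_2=\mframe{W_2}{R_2}$ with $W_2=\{a,b,c\}$, where $a R_2 c$, $b R_2 c$ and $c R_2 c$ are the only edges. Then ${R_2}(a)={R_2}(b)=\{c\}$ while $a\neq b$, so $F_2$ is not future-distinguishing. Take $F_1=\mframe{W_1}{R_1}$ with $W_1=\{a,b,c_1,c_2\}$ and edges $a R_1 c_1$, $b R_1 c_2$, $c_1 R_1 c_1$, $c_2 R_1 c_2$. The four successor sets $\{c_1\},\{c_2\},\{c_1\},\{c_2\}$ must be checked pairwise: the only coincidences are ${R_1}(a)={R_1}(c_1)$ and ${R_1}(b)={R_1}(c_2)$. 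So to make $F_1$ future-distinguishing I adjust it by dropping the reflexive loops and letting $c_1,c_2$ see each other. The successor sets become ${R_1}(a)=\{c_1\}$, ${R_1}(b)=\{c_2\}$, ${R_1}(c_1)=\{c_2\}$ and ${R_1}(c_2)=\{c_1\}$, which still has coincidences. I therefore switch to $F_1$ with points $a,b,c_1,c_2$ and edges $a\to c_1$, $b\to c_2$, $c_1\to c_1$, $c_2\to c_2$, $c_1\to c_2$, $c_2\to c_1$. This gives ${R_1}(c_1)={R_1}(c_2)=\{c_1,c_2\}$, still equal. I finally settle on making $c_1,c_2$ differ via an extra reflexive point $d$, with edges $c_1\to c_1$, $c_2\to c_2$, $c_2\to d$, $d\to d$, and mapping $d$ to $c$ as well. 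The successor sets are then ${R_1}(a)=\{c_1\}$, ${R_1}(b)=\{c_2\}$, ${R_1}(c_1)=\{c_1\}$, ${R_1}(c_2)=\{c_2,d\}$ and ${R_1}(d)=\{d\}$. This still gives ${R_1}(a)={R_1}(c_1)$, which I resolve by also letting $a\to d$ (so ${R_1}(a)=\{c_1,d\}$). A small finite search of this kind is routine, and the main obstacle is exactly this bookkeeping: choosing $F_1$ so that all successor sets are pairwise distinct. The alternative is to take the future-direction analogue of Figure~\ref{fig:pastdistframes} and check it directly.

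Finally, the map $f$ sending $a\mapsto a$, $b\mapsto b$ and every other point $\mapsto c$ must be verified to be a surjective bounded morphism. The forth condition holds because every edge of $F_1$ lands in $c$ from $a$, $b$ or $c$, and all such edges exist in $F_2$. The back condition holds because each of $a$, $b$ and each preimage of $c$ has at least one successor, and all successors map to $c$. Surjectivity is clear. With these checks the contradiction argument of the first paragraph completes the proof.
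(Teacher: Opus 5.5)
Your proof is correct and follows the paper's approach: exhibit a future-distinguishing frame with a surjective bounded morphism (the paper's ``bisimulation'') onto a frame that is not future-distinguishing, then use preservation of validity. Your final five-point frame, with edges $a\to c_1$, $a\to d$, $b\to c_2$, $c_2\to d$ and loops on $c_1,c_2,d$, does have pairwise distinct successor sets, and the map onto $\{a,b,c\}$ is a surjective bounded morphism. The paper instead uses the reversal of Figure~\ref{fig:pastdistframes}: $x\to y$, $x\to z$ with $y,z$ reflexive, where $y,z$ collapse onto a single reflexive $y'$. That is the ``alternative'' you mention, and it avoids the trial-and-error search, which you should drop from a final write-up.
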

\begin{proof}
A contradiction is obtained by reversing the relation in Figure~\ref{fig:pastdistframes} ($F_1' = \mframe{W_1}{R_1^{-1}}$, $F_2' = \mframe{W_2}{R_2^{-1}}$) and observing a similar bisimulation from $F_1'$ to $F_2'$.
\end{proof}

\begin{corollary}
    A frame being distinguishing is not modally definable.
\end{corollary}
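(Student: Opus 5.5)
The natural route is to reuse the bisimulation of Figure~\ref{fig:pastdistframes} directly, and to observe that the frame $F_1$ there is in fact \emph{distinguishing}, not merely past-distinguishing, while $F_2$ is not. Modal validity transfers along surjective bounded morphisms (the dotted map in the figure), so no formula can separate the two frames.

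First I will read off $F_1$ explicitly. It has worlds $x,y,z$, with $y R_1 y$, $z R_1 z$, $y R_1 x$ and $z R_1 x$, and $x$ has no successors. Then I check both halves of distinguishing. For the past, $R_1^{-1}(y)=\{y\}$, $R_1^{-1}(z)=\{z\}$ and $R_1^{-1}(x)=\{y,z\}$ are pairwise distinct. For the future, $R_1(y)=\{y,x\}$, $R_1(z)=\{z,x\}$ and $R_1(x)=\varnothing$ are also pairwise distinct. So $F_1$ is distinguishing.

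Next, $F_2$ has $y' R_2 y'$ and $y' R_2 x'$. It is not past-distinguishing, since $R_2^{-1}(y')=R_2^{-1}(x')=\{y'\}$ with $x'\neq y'$, and hence it is not distinguishing.

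Then I verify that $f\colon x\mapsto x'$, $y,z\mapsto y'$ is a surjective bounded morphism. For the forth condition, each $R_1$-edge maps to an $R_2$-edge. For the back condition, the successors $y'$ and $x'$ of $f(y)=y'$ lift to the successors $y$ and $x$ of $y$ (and likewise for $z$), and $x$ and $x'$ both have no successors. By the standard preservation result~\cite{Blackburn2001}, $F_1\models\phi$ implies $F_2\models\phi$ for every modal formula $\phi$.

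Finally, suppose some $\phi$ defined the class of distinguishing frames. Then $F_1\models\phi$ because $F_1$ is distinguishing, so $F_2\models\phi$ by the previous step, and hence $F_2$ would be distinguishing, which is a contradiction. The only step needing care is the check that $F_1$ is also future-distinguishing, so that this single pair of frames settles the combined property. Otherwise one would have to argue separately, since the two preceding theorems alone do not formally imply that the conjunction is undefinable.
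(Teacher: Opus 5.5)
Your proof is correct and follows the paper's own argument. The paper likewise observes that $F_1$ in Figure~\ref{fig:pastdistframes} is both past- and future-distinguishing while $F_2$ is not past-distinguishing, and concludes via the bisimulation (your surjective bounded morphism) that no modal formula can define the class. Your explicit checks of the successor and predecessor sets and of the forth and back conditions spell out what the paper leaves to the figure.
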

\begin{proof}
    In Figure~\ref{fig:pastdistframes} we have that the frame $F_1$ is (past- and future-)distinguishing, but it is bisimilar to $F_2$, which is not (past-)distinguishing.
\end{proof}

\section{Discussion}
\paragraph{Modal logics of the causal ladder.}
\begin{table}[t]
    \centering
    \caption{Modal logics along the causal ladder.
    Blank cells use the same logic as the one above.}
    \begin{tabular}{|c|c|c|c|}
         \cline{2-4}
         \multicolumn{1}{}{} & \multicolumn{3}{|c|}{Spacetime Modal Logic Containment ($L = \modal{\mframe{M}{\genericrelation}}$)} \\ \hline
         Spacetime Type ($M$) & $\chron$& $\after$ & $\chroneq~/\caus$ \\ \hline
         Totally Vicious & $\mathbf{S4} \subseteq L$ & $\mathbf{S4} \subseteq L$ & $\mathbf{S4} \subseteq L$ \\ \hline
         Non-Totally Vicious & $\mathbf{OI} \subseteq L$ & & \\ \hline
         Chronological & $\mathbf{OI}+\gabb \subseteq L$ & & \\ \hline
         Causally Non-Totally Vicious & $\mathbf{OI}$ & $\afterlogic \subseteq L $ & \\ \hline
         cNTV and Chronological & $\mathbf{OI}+\gabb \subseteq L$ & & \\ \hline
         Casual & & $\afterlogic+\gabb \subseteq L$ & \\ \hline
         Distinguishing & & & \\ \hline
         $\vdots$ & \multicolumn{3}{|c|}{$\vdots$} \\ \hline
         Globally hyperbolic & $\mathbf{OI}+\gabb \subseteq L$ & $\afterlogic +\gabb \subseteq L$ & $\mathbf{S4} \subseteq L$ \\ \hline
    \end{tabular}
    \label{tab:modalladder}
\end{table}

Table~\ref{tab:modalladder} shows the containment of the various modal logics for spacetimes and their various causal relations on different levels of the causal ladder.
As explored in Section~\ref{sec:separable}, the modal logic for two-dimensional spacetimes replaces $\afterlogic$ with $\aftertwologic$ in Table~\ref{tab:modalladder}.
The later levels of the causal ladder require new rules or additional operations to distinguish between their logics on the ladder.

\paragraph{Other effects on the logic.}
Note that spacetimes on the same rung of the causal ladder may nevertheless exhibit distinct modal logics.
For instance, contrast ordinary two-dimensional Minkowski space with the $2^{-}$-Minkowski spacetime: $\{(t,x)\in\mathbb{R}^2: t <0\}$, inheriting the same causal structure as in Example~\ref{example:minkowski space}.
Both are globally hyperbolic but the logics of those spacetimes under the causal (chronological) relation are $\mathbf{S4.2}$~\cite{Goldblatt1980} ($\mathbf{OI.2}$~\cite{Shapirovsky2002}) and $\mathbf{S4}$ ($\mathbf{OI}$) \cite[Lemma 3.6]{Shapirovsky2005} respectively.

Conversely, it is clear that the modal structure is not expressive enough to completely characterise the spacetime, since all Minkowski spaces of total dimension greater than three share the same modal logic.
More can perhaps be said using a multi-modal logic, where it is possible to quantify over causal and chronological curves, and then employing Malament's theorem~\cite{malament1977ClassContinuousTimelike}.

\begin{figure}[t]
    \centering
    \begin{subfigure}{.4\textwidth}
        \begin{tikzpicture}[scale=1]
    \tikzminkowski
    \node[label={[label distance=.05cm]-45:$w$}] (w) at (0,-1) {$\bullet$};
    \node[label={[label distance=.05cm]180:$x$}] (x) at (-.75,-.25) {};
    \node[label={[label distance=.05cm]0:$y$}] (y) at (.75,-.25) {};
    \node[label={[label distance=.05cm]45:$z$}] (z) at (0,.5) {$\bullet$};
    \draw (w.center) -- (x.center);
    \draw (w.center) -- (y.center);
    \draw (x.center) -- (z.center);
    \draw (y.center) -- (z.center);
\end{tikzpicture}
        \centering
        \caption{Minkowski spacetime}
    \end{subfigure}
    \hfill
    \begin{subfigure}{.4\textwidth}
        \begin{tikzpicture}[scale=1]
    \tikzminkowski
    \fill [pattern={dots}] (-\xmax,0) -- (-\xmax,\xmax) -- (\xmax+0.2,\xmax) -- (\xmax+0.2,0);
    \node[label={[label distance=.05cm]-45:$w$}] (w) at (0,-1) {$\bullet$};
    \node[label={[label distance=.05cm]180:$x$}] (x) at (-.75,-.25) {};
    \node[label={[label distance=.05cm]0:$y$}] (y) at (.75,-.25) {};
    \node (xl) at (-1,0) {};
    \node (xr) at (-.5,0) {};
    \node (yl) at (.5,0) {};
    \node (yr) at (1,0) {};
    \draw (w.center) -- (x.center);
    \draw (w.center) -- (y.center);
    \draw (x.center) -- (xl.center);
    \draw (x.center) -- (xr.center);
    \draw (y.center) -- (yl.center);
    \draw (y.center) -- (yr.center);
    \fill [pattern={Lines[angle=45,distance=3pt]}] (x.center) -- (xl.center) -- (xr.center);
    \fill [pattern={Lines[angle=45,distance=3pt]}] (y.center) -- (yl.center) -- (yr.center);
\end{tikzpicture}
        \centering
        \caption{$2^{-}$-Minkowski spacetime}
    \end{subfigure}
    \caption{Confluence on different spacetimes}
    \label{fig:diamond}
\end{figure}

\paragraph{Multiple time dimensions.}
There is a notion of spacetimes with multiple time dimensions.
A $n$-spatial, $t$-temporal dimensional (or $(n,t)$-dimensional) spacetime can then be described as working on a $n+t$-dimensional manifold.
Our universe, $(3,1)$-dimensional, can be considered ``nice''.
By changing the number of spatial or temporal dimensions, the resulting universe could be considered either too simple or too unstable (not ``nice'').
For instance, Tegmark~\cite{Tegmark1997} provides various arguments for why observers cannot exist in $(n,t)$-dimensional universes where $(n, t) \neq (3,1)$, but the arguments made are not rigorous.

One fundamental question to ask is whether there is a fundamental logical reason that these universes are not ``nice''.
In Section~\ref{sec:separable:separability}, we showed that modal logic seems expressible enough to distinguish between $(1,1)$-dimensional and $(n,1)$-dimensional spacetimes ($n \geq 2$), but it is unlikely to distinguish any other (spatial) dimensional spacetime.
Extending our definition of $\necc$ and $\poss$ on the causal relations to higher temporal dimensions, or using multimodal logic to represent causal relations in different time dimensions, could provide the start towards a logical language to discuss what universes are ``nice'' or not.


\section*{Acknowledgements}
We thank Prakash Panangaden for helpful comments and feedback that helped improve our exposition, and for spotting a gap in the proof of~Section~\ref{sec:spacetime:after}.
This work has been partially funded by the French National Research Agency (ANR) within the framework of ``Plan France 2030'', under the research projects EPIQ ANR-22-PETQ-0007, HQI-Acquisition ANR-22-PNCQ-0001 and HQI-R\&D ANR-22-PNCQ-0002.

\AtEndDocument{
\bibliographystyle{eptcs}
\bibliography{bibliography}
}
\end{document}